\documentclass[conference,letterpaper]{IEEEtran}

\usepackage[utf8]{inputenc}
\usepackage[T1]{fontenc}
\usepackage[cmex10]{amsmath}      
\usepackage{amssymb,amsfonts}
\usepackage{mathrsfs}             
\usepackage{cite}
\usepackage{graphicx}
\usepackage{url}
\usepackage{ifthen}
\usepackage{enumitem}             
\newlist{tightlist}{itemize}{3}
\setlist[tightlist]{label=$\bullet$, nosep}

\newcounter{theorem}
\newcounter{lemma}
\newcounter{proposition}
\newcounter{corollary}
\newcounter{definition}
\newcounter{example}
\newcounter{remark}

\newenvironment{theorem}[1][]{%
  \refstepcounter{theorem}\par\smallskip\noindent%
  \textbf{\textit{Theorem~\thetheorem}}%
  \ifthenelse{\equal{#1}{}}{}{ \textit{(#1)}}%
  \textit{.}\ \itshape\ignorespaces%
}{\par\smallskip}

\newenvironment{lemma}[1][]{%
  \refstepcounter{lemma}\par\smallskip\noindent%
  \textbf{\textit{Lemma~\thelemma}}%
  \ifthenelse{\equal{#1}{}}{}{ \textit{(#1)}}%
  \textit{.}\ \itshape\ignorespaces%
}{\par\smallskip}

\newenvironment{proposition}[1][]{%
  \refstepcounter{proposition}\par\smallskip\noindent%
  \textbf{\textit{Proposition~\theproposition}}%
  \ifthenelse{\equal{#1}{}}{}{ \textit{(#1)}}%
  \textit{.}\ \itshape\ignorespaces%
}{\par\smallskip}

\newenvironment{corollary}[1][]{%
  \refstepcounter{corollary}\par\smallskip\noindent%
  \textbf{\textit{Corollary~\thecorollary}}%
  \ifthenelse{\equal{#1}{}}{}{ \textit{(#1)}}%
  \textit{.}\ \itshape\ignorespaces%
}{\par\smallskip}

\newenvironment{definition}[1][]{%
  \refstepcounter{definition}\par\smallskip\noindent%
  \textbf{\textit{Definition~\thedefinition}}%
  \ifthenelse{\equal{#1}{}}{}{ \textit{(#1)}}%
  \textit{.}\ \normalfont\ignorespaces%
}{\par\smallskip}

\newenvironment{remark}[1][]{%
  \refstepcounter{remark}\par\smallskip\noindent%
  \textbf{\textit{Remark~\theremark}}%
  \ifthenelse{\equal{#1}{}}{}{ \textit{(#1)}}%
  \textit{.}\ \normalfont\ignorespaces%
}{\par\smallskip}

\newcommand{\F}{\mathbb{F}}
\newcommand{\dH}{d_H}
\newcommand{\wt}{\mathrm{wt}}
\newcommand{\Enc}{\mathrm{Enc}}
\newcommand{\pC}{\mathbf{p}_{\mathcal{C}}}
\newcommand{\CH}{\mathscr{C}_{H}}

\newcommand{\vv}[1]{\mathbf{#1}}

\begin{document}

\title{Function-Correction with Optimal Data Protection
       for the General Hamming Code Membership}
\author{%
  \IEEEauthorblockN{Adityawardhan Yadava}
  \IEEEauthorblockA{Department of Engineering Science\\
                    IIT Hyderabad
                    Hyderabad, India\\
                    Email: es23btech11001@iith.ac.in}
  \and
  \IEEEauthorblockN{Anjana A.\ Mahesh and Swaraj Sharma Durgi}
  \IEEEauthorblockA{Department of Electrical Engineering\\
                    IIT Hyderabad, 
                    Hyderabad, India\\
                    Email: anjana.am@ee.iith.ac.in, ee24btech11018@iith.ac.in}
}
\maketitle
\sloppy
\hbadness=10000
\vbadness=10000
\begin{abstract}
This paper investigates single-error-correcting function-correcting
codes~(SEFCCs) for the $[2^n-1,\,2^n-1-n,\,3]$-Hamming code
membership function~(HCMF) for general $n\geq 2$. Necessary and
sufficient conditions for valid parity assignments are established,
and the distance-$3$ codeword graph is shown to induce a connected
bipartite structure for all $n\geq 2$, which is exploited to develop
a systematic SEFCC construction achieving the largest possible minimum
distance of~$2$. A novel framework is then developed that reduces the
minimization of distance-$2$ codeword pairs to a max-cut problem on
the distance-$4$ graphs of the two partite sets. Eigenvectors
corresponding to the minimum eigenvalue of these graphs are shown to
directly yield optimal parity assignments. We reduce the problem of
finding these eigenvectors to an optimization problem involving
moments of the Walsh coefficients of a related function, which we
solve for even~$n$ by deriving a tight lower bound shown to be
attained by bent functions, establishing a precise connection between
optimal SEFCC design and bent Boolean functions.
\end{abstract}

\begin{IEEEkeywords}
Function-correcting codes, Hamming codes, subspace membership, max-cut, Walsh coefficients, RM Codes.
\end{IEEEkeywords}

\section{Introduction}
\label{sec:intro}
In many modern communication and distributed computing
systems, the receiver's objective is not the full reconstruction
of the transmitted message, but rather the reliable computation
of a specific function of that message. Classical error-correcting
codes (ECCs), designed to protect the entire message, often
provide unnecessary redundancy in such settings. This
observation motivated the introduction of function-correcting
codes (FCCs) by Lenz et al. ~\cite{lenz2023},
which guarantee that a function $f(\mathbf{u})$ can be correctly
recovered even when the codeword representing $\mathbf{u}$ is
corrupted by up to $t$ errors. FCCs are equivalent to
irregular-distance codes---codes that obey non-uniform,
function-dependent distance requirements between pairs of
codewords---and this equivalence enables tight bounds on
optimal redundancy to be derived. Since codewords assigned
to messages with the same function value carry no mutual
distance requirement, the redundancy of an FCC can be
substantially lower than that of a classical ECC protecting
the full message.

Since their introduction, FCCs have been studied along
several directions. Ge~et~al.\ analysed optimal redundancy
for Hamming weight and weight-distribution functions and
provided explicit constructions via a connection to Gray
codes~\cite{ge2025}. Premlal and Rajan established
new bounds on optimal FCC redundancy and extensively
studied the linear function case~\cite{premlal2025}.
The FCC framework has also been extended to other channel
models: Xia~et~al.\ generalised FCCs to the symbol-pair
metric~\cite{xia2024}, while Singh~et~al.\ extended
the framework to $b$-symbol read channels over finite
fields~\cite{singh2025}. Structural aspects---including
homogeneous distance requirements and locally bounded
function constraints---have been studied more recently
as well~ \cite{liu2025hom}, \cite{rajput2025lb}.

A key practical observation, highlighted in~\cite{rajput2025dp},
is that multiple FCCs may exist with the same redundancy
and function-correcting capability yet offer varying levels
of protection for the underlying data, implying that error
performance with respect to data bits is not solely determined
by the function-correction requirement. This motivated a
general framework for FCCs with data protection: Rajput et al. proposed a framework
that offers simultaneous protection for both the data and
the function value, focusing on scenarios where function
values require stronger protection than the data itself,
and demonstrated that data protection can sometimes be
incorporated into existing FCCs without any increase in
redundancy~\cite{rajput2025dp}.

Despite this progress, functions defined by algebraic
membership in a subspace of a vector space have received
limited attention. The Hamming code membership function
(HCMF)---which evaluates to $1$ if a given vector is a
codeword of a Hamming code and to $0$ otherwise---is a
canonical and practically motivated instance of this class.
It arises naturally in syndrome-based detection, coded
distributed computing, and storage systems where the
primary task is to verify membership rather than reconstruct
a full message. The prior work~\cite{durgi2026hcmf} addressed
this gap for the specific case of the $[7,4,3]$-Hamming code.
That work established necessary and sufficient conditions
for valid parity assignments in terms of distance constraints
between codewords and their nearest non-codewords, showed
that Hamming-distance-$3$ relations among the $16$ Hamming
codewords induce a bipartite graph, and exploited this
bipartite structure to develop a systematic SEFCC
construction. By deriving a tight upper bound on the sum
of pairwise distances, it was proved that the proposed
bipartite construction uniquely achieves the maximum
sum-distance, the largest possible minimum distance of $2$,
and the minimum number of distance-$2$ codeword pairs,
establishing that sum-distance maximisation is not merely
heuristic but an exact optimality criterion for the HCMF
SEFCC problem.

In this paper, we extend this line of work to the full family of
binary Hamming codes, studying single-error-correcting FCCs
for the $[2^n-1,\,2^n-1-n,\,3]$-Hamming code membership
function for general $n \geq 2$. The mechanism developed in~\cite{durgi2026hcmf}
for constructing SEFCCs with optimal data protection for the
$[7,4,3]$-HCMF relies on a quadratic minimization argument
that is specific to the small size and tight combinatorial struc-
ture of that case, and does not extend to the general HCMF.
We therefore develop a novel framework that characterises
SEFCCs with optimal data protection for the general HCMF
as a graph-theoretic optimisation problem. As a consistency
check, the solution to this optimisation problem specialised to
$n=3$ subsumes the code construction presented in~\cite{durgi2026hcmf}.
The contributions of this paper are as follows
\begin{itemize}

  \item We extend the necessary and sufficient conditions
  on parity assignments for a valid SEFCC derived in \cite{durgi2026hcmf} for the [7,4,3]-HCMF for the general
  Hamming code membership function, and prove that the
  distance-$3$ graph on the Hamming codewords retains a
  bipartite structure for all $n \geq 2$.

  \item A systematic SEFCC construction for the
  $[2^n-1,\,2^n-1-n,\,3]$-HCMF achieving optimal data
  protection is obtained by minimising the number of
  codeword pairs at the largest possible minimum distance of $2$. We pose
  this as a max-cut problem on the distance-$4$ graphs of
  the two partite sets independently, and show that
  eigenvectors corresponding to the minimum eigenvalue of
  the distance-$4$ graph on one of the partite sets directly yield
  optimal parity assignments for both sets.

  \item We formulate an optimisation problem involving
  the Walsh coefficients of a function that parameterises
  the eigenvalues of this graph and their corresponding eigenvectors. To achieve the minimum eigenvalue, a tight lower bound is derived and shown to be attained for even $n$ by bent functions, establishing a precise
  connection between optimal SEFCC design and the theory
  of bent Boolean functions.


\end{itemize}

The remainder of this paper is organised as follows.
Section~\ref{sec:prelim} provides the necessary
preliminaries on function-correcting codes and the
family of binary Hamming codes. Section~\ref{sec:setup}
defines the problem setup and the general Hamming code
membership function. Section~\ref{sec:main} presents
our main results: necessary and sufficient conditions
for valid parity assignment, the bipartite structure of
the distance-$3$ codeword graph, the max-cut formulation,
and the optimal construction via bent functions. Section~\ref{sec:conc} concludes the paper.


\textit{Notations:} For any positive integer $n$, the shorthand $[n]$ refers
to the set $\{1,2,\ldots,n\}$. The finite field with $2$ elements
is denoted by $\F_2$. Vectors of length $n$ over $\F_2$ form the vector
space $\F_2^n$. For a binary vector $\vv{x}$, its bit-wise complement is
denoted as $\bar{\vv{x}}$. A linear block code with length $n$, dimension
$k$, and minimum Hamming distance $d$ over $\F_2$ is denoted as
a $[n,k,d]$ code. The Hamming weight of a vector $\vv{x} \in \F_2^n$,
denoted $\wt(\vv{x})$, is the number of positions in $\vv{x}$ that are nonzero.
The Hamming distance between two vectors $\vv{x}, \vv{y} \in \F_2^n$,
denoted $\dH(\vv{x},\vv{y})$, is the number of positions where the bits
in $\vv{x}$ and $\vv{y}$ differ.

\section{Preliminaries}
\label{sec:prelim}

\subsection{Function-Correcting Codes}

\begin{definition}[Systematic Function-Correcting Code {\cite{lenz2023}}]
Let $f : \F_2^k \to \mathscr{I}$ be a function.
A systematic encoder $\Enc : \F_2^k \to \F_2^{k+r}$ of the form
$\Enc(\vv{u}) = (\vv{u},\, \vv{p}(\vv{u}))$ is an $(f,t)$-\emph{FCC} if for all
$\vv{u}_1, \vv{u}_2 \in \F_2^k$ with $f(\vv{u}_1) \neq f(\vv{u}_2)$,
\begin{equation}
  \dH\!\left(\Enc(\vv{u}_1),\Enc(\vv{u}_2)\right) \geq 2t+1.
  \label{eq:fcc_condition}
\end{equation}
\end{definition}

\begin{definition}[Optimal Redundancy {\cite{lenz2023}}]
Given $f$ and $t$, the \emph{optimal redundancy} $r_f(k,t)$ is the
minimum $r$ such that there exists an encoder
$\Enc : \F_2^k \to \F_2^{k+r}$ forming an $(f,t)$-FCC.
\end{definition}

\subsection{The $[2^n-1,\,2^n-1-n,\,3]$ Hamming Code}

The $[2^n-1,\,2^n-1-n,\,3]$-Hamming code is a binary linear block code
that encodes $2^n-1-n$ information bits into $2^n-1$ coded bits and has
minimum Hamming distance~$3$. Being a perfect code, meaning that every
vector in $\F_2^{2^n-1}$ lies within Hamming distance one of a unique
codeword, it partitions the space $\F_2^{2^n-1}$ into $2^{2^n-1-n}$
Hamming spheres, denoted $S_i$, $i \in [2^{2^n-1-n}]$, with one of the
$2^{2^n-1-n}$ codewords, denoted $\vv{c}_i$, at the centre of each sphere
and $2^n-1$ unique non-codewords at a Hamming distance of $1$ from the
codeword at the centre. We denote the non-codewords nearest to the codeword
$\vv{c}_i$ as $\vv{u}_{i,j}$, $j \in [2^n-1]$. Thus, the vector space
$\F_2^{2^n-1}$ is partitioned into
\begin{align*}
  \CH &= \{\vv{c}_i\}_{i=1}^{2^{2^n-1-n}}, \text{ and} \\
  \F_2^{2^n-1} \setminus \CH
      &= \{\vv{u}_{i,j} : i \in [2^{2^n-1-n}],\; j \in [2^n-1]\}.
\end{align*}
Owing to its algebraic structure, uniform distance properties, and
optimality for single-error correction, the
$[2^n-1,\,2^n-1-n,\,3]$-Hamming code serves as a canonical example in
coding theory and provides a natural testbed for studying FCCs based on
code membership functions~\cite{macwilliams1977}.

\section{Problem Setup}
\label{sec:setup}

For the $[2^n-1,\,2^n-1-n,\,3]$-Hamming code defined in
Section~\ref{sec:prelim}, let $\CH$ denote the set of $2^{2^n-1-n}$
codewords. We study function-correcting codes for the Hamming code
membership function (HCMF) $f_H : \F_2^{2^n-1} \to \F_2$ defined as
\begin{equation*}
  f_H(\vv{v}) =
  \begin{cases}
    1, & \vv{v} \in \CH, \\
    0, & \vv{v} \in \F_2^{2^n-1} \setminus \CH.
  \end{cases}
\end{equation*}
Thus, the domain $\F_2^{2^n-1}$ contains $2^{2^n-1-n}$ Hamming codewords
(HCWs) for which $f_H$ evaluates to $1$, and $2^{2^n-1} - 2^{2^n-1-n}$
vectors that are not Hamming codewords (NHCWs) for which $f_H$ evaluates
to~$0$.

For all boolean functions, it was shown in~\cite{lenz2023} that the optimal
redundancy for a $t$-error-correcting FCC is $r_f(k,t) = 2t$. Hence, an
optimal $t$-error-correcting systematic FCC $\mathcal{C}$ for the HCMF
$f_H$ is a mapping $\F_2^{2^n-1} \to \F_2^{2^n+1}$ with
$\Enc(\vv{u}_i) = [\vv{u}_i,\, \pC(\vv{u}_i)]$, where
$\pC(\vv{u}_i) \in \F_2^{2t}$ is the $2t$-length parity vector assigned
to the input vector $\vv{u}_i \in \F_2^{2^n-1}$. In this paper, we
consider single-error-correcting function-correcting codes (SEFCCs) for
the function $f_H$. For single-error correction ($t = 1$), the FCC
distance constraint in~\eqref{eq:fcc_condition} reduces to
\begin{equation}
  \dH\!\left(\Enc(\vv{c}_i),\Enc(\vv{u}_j)\right) \geq 3,
  \quad \forall\, \vv{c}_i \in \CH,\; \vv{u}_j \in \F_2^{2^n-1} \setminus \CH.
  \label{eq:sefcc_condition}
\end{equation}

Apart from the FCC distance constraint~\eqref{eq:sefcc_condition}, among
all valid parity assignments we further seek to: maximize 
\begin{equation}
  d_{\min}(\mathcal{C})
  \;=\; \min_{\substack{\vv{u},\vv{v}\,\in\,\F_2^{2^n-1} \\ \vv{u} \neq \vv{v}}}
        \dH\!\left(\Enc(\vv{u}),\Enc(\vv{v})\right),
  \label{eq:opt_dmin}
\end{equation}
and, among all valid parity assignments achieving the maximum
$d_{\min}(\mathcal{C})$, say $d_{\min}^*(\mathcal{C})$,  in~\eqref{eq:opt_dmin}: minimize the number of codewords at minimum distance i.e.,
\begin{equation}
  \bigl|\bigl\{\{\vv{u},\vv{v}\} : \dH\!\left(\Enc(\vv{u}),\Enc(\vv{v})\right)
  = d_{\min}^*(\mathcal{C})\bigr\}\bigr|.
  \label{eq:opt_pairs}
\end{equation}

\begin{definition}[Valid SEFCC]
\label{def:valid_sefcc}
An SEFCC $\mathcal{C} : \F_2^{2^n-1} \to \F_2^{2^n+1}$
for the HCMF $f_H$ is said to be \emph{valid} if it
satisfies~\eqref{eq:sefcc_condition}.
\end{definition}

\begin{definition}[Optimal SEFCC]
\label{def:optimal_sefcc}
A valid SEFCC $\mathcal{C} : \F_2^{2^n-1} \to \F_2^{2^n+1}$
for the HCMF $f_H$ is said to be \emph{optimal} and to have
\emph{optimal parity assignments} if it satisfies
constraints~\eqref{eq:opt_dmin} and~\eqref{eq:opt_pairs}.
\end{definition}

\section{Main Results}
\label{sec:main}

\subsection{Validity Conditions and Distance-3 Graph Structure}

Theorem \ref{thm:valid_dmin2} and Proposition \ref{prop:bipartite} below extends the corresponding results of~\cite{durgi2026hcmf} from the
$[7,4,3]$-HCMF to the general $[2^n-1,\,2^n-1-n,\,3]$-HCMF; the
proofs follow analogously.

\begin{theorem}[Valid SEFCC with Maximum $d_{\min}$]
\label{thm:valid_dmin2}
An SEFCC $\mathcal{C} : \F_2^{2^n-1} \to \F_2^{2^n+1}$ for
the HCMF $f_H$ is valid and achieves the maximum minimum distance
$d_{\min}^*(\mathcal{C}) = 2$ if and only if:
\begin{enumerate}[nosep, topsep=0pt, partopsep=0pt]
\item[a)] $\pC(\vv{u}_{i,j}) = \overline{\pC(\vv{c}_i)}$,
$\forall\, j \in [2^n-1]$, for each $i \in [2^{2^n-1-n}]$;
\item[b)] $\dH(\pC(\vv{c}_i), \pC(\vv{c}_k)) = 1$ for any pair
$(\vv{c}_i, \vv{c}_k)$ such that $\dH(\vv{c}_i, \vv{c}_k) = 3$,
$i, k \in [2^{2^n-1-n}]$.
\end{enumerate}
\end{theorem}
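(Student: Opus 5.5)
The plan is to use the fact that the parity vector has only $2t=2$ bits. This lets us turn each distance requirement on pairs $(\Enc(\vv{u}),\Enc(\vv{v}))$ into a requirement on the parity distance, by sorting pairs according to $\dH(\vv{u},\vv{v})$. Throughout we use perfectness: every NHCW is $\vv{u}_{i,j}=\vv{c}_i+\vv{e}_j$ for a unique $i$. We first find exactly which parity assignments are valid, then show $d_{\min}\le 2$ always holds, and finally characterise when $d_{\min}=2$ is reached.

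\emph{Validity.} Take a HCW $\vv{c}_i$ and an NHCW $\vv{u}=\vv{u}_{k,l}$; we split on their message distance.
\begin{itemize}
\item If $k=i$, then $\dH(\vv{c}_i,\vv{u})=1$. Condition~\eqref{eq:sefcc_condition} needs parity distance $2$, which forces $\pC(\vv{u}_{i,j})=\overline{\pC(\vv{c}_i)}$. This is condition a), and it is necessary for validity.
\item If $k\neq i$, then $\dH(\vv{c}_i,\vv{u})\ge 2$. Equality means $\dH(\vv{c}_i,\vv{c}_k)=3$, since codewords are at distance at least $3$ and the triangle inequality gives at most $3$. Conversely, for any pair with $\dH(\vv{c}_i,\vv{c}_k)=3$, write $\vv{c}_k=\vv{c}_i+\vv{e}_a+\vv{e}_b+\vv{e}_c$. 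Then $\vv{u}_{k,a}$ is at distance $2$ from $\vv{c}_i$, so such pairs actually occur.
\end{itemize}
Given a), the requirement for these distance-$2$ pairs is $\dH(\pC(\vv{c}_i),\overline{\pC(\vv{c}_k)})\ge1$, that is, $\dH(\pC(\vv{c}_i),\pC(\vv{c}_k))\le 1$. Message distances $\ge3$ impose nothing. So the code is valid if and only if a) holds and $\dH(\pC(\vv{c}_i),\pC(\vv{c}_k))\le1$ for every pair of codewords at distance $3$.

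\emph{Upper bound on $d_{\min}$.} Under a), the words $\vv{u}_{i,j}$ and $\vv{u}_{i,j'}$ with $j\ne j'$ have equal parities. Their messages differ in two positions, so $d_{\min}(\mathcal{C})\le 2$ for every valid code, and hence $d^*_{\min}=2$ is the best possible.

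\emph{Attaining $d_{\min}=2$.} We must rule out total distance $1$. That only requires checking message pairs at distance $1$ whose parities coincide.
\begin{itemize}
\item A HCW–NHCW pair at distance $1$ has parity distance $2$ by a).
\item Two HCWs are never at distance $1$.
\item For two NHCWs $\vv{u}_{i,j},\vv{u}_{k,l}$ at distance $1$, we have $i\ne k$, since same-sphere NHCWs are at distance $2$. Then $\dH(\vv{c}_i,\vv{c}_k)\le3$, hence exactly $3$. Their parities are $\overline{\pC(\vv{c}_i)}$ and $\overline{\pC(\vv{c}_k)}$, so we need $\pC(\vv{c}_i)\ne\pC(\vv{c}_k)$.
\end{itemize}
Every distance-$3$ pair of codewords yields such an adjacent NHCW pair. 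Writing $\vv{c}_k=\vv{c}_i+\vv{e}_a+\vv{e}_b+\vv{e}_c$, the words $\vv{u}_{i,a}$ and $\vv{u}_{k,b}$ are at distance $1$. So the extra condition is exactly $\pC(\vv{c}_i)\neq\pC(\vv{c}_k)$ for all distance-$3$ pairs. Combined with the validity bound $\le 1$, this gives b).

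Sufficiency follows by reading the same case analysis backwards: a) and b) give validity, and they also give total distance $\ge 2$ for every pair at message distance $1$. The main point needing care is exhaustiveness of the case split. Here the perfect-code structure is essential: it guarantees that an NHCW belongs to a unique sphere, and that message distances $2$ and $1$ between objects in different spheres force the centres to be at distance exactly $3$.
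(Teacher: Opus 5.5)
Your proof is correct and is essentially the argument the paper has in mind: it defers the proof to the $[7,4,3]$ case, where validity and $d_{\min}=2$ are likewise obtained by a case analysis on message distances between each codeword, its own sphere, and the spheres of codewords at distance $3$. One small addition is needed. Your bound gives $d_{\min}\le 2$ for every valid code, but asserting $d_{\min}^*=2$ also requires that some assignment satisfying b) exists. This follows at once from the bipartition in Proposition~\ref{prop:bipartite}, e.g.\ parity $00$ on even-weight codewords and $01$ on odd-weight ones, extended to the non-codewords via a).
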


\begin{proposition}[Bipartite Structure of Distance-3 HCW Graph]
\label{prop:bipartite}
Let $G_3 = (\mathcal{V}, \mathcal{E})$ be the graph where vertex set $\mathcal{V} = \CH$ is the set of all codewords of the $[2^n-1,\,2^n-1-n,\,3]$ Hamming code and edge set $\mathcal{E} = \{(\vv{c}_i,\vv{c}_j) \mid \dH(\vv{c}_i,\vv{c}_j)=3\}$.
Then $G_3$ is bipartite with bipartition $\CH = \mathcal{C}_e \cup \mathcal{C}_o$,
where $\mathcal{C}_e$ and $\mathcal{C}_o$ denote the subsets of Hamming
codewords with even and odd Hamming weights, respectively.
\end{proposition}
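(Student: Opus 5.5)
The plan is to use a parity argument on Hamming weights, which shows that every edge of $G_3$ joins an even-weight codeword to an odd-weight one. Then $\mathcal{C}_e$ and $\mathcal{C}_o$ are independent sets covering $\CH$, and this is exactly bipartiteness with the stated bipartition.

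First, observe that $\mathcal{C}_e$ and $\mathcal{C}_o$ partition $\CH$, since every codeword has either even or odd weight. For $\vv{x},\vv{y}\in\F_2^{2^n-1}$ we have the standard identity
\begin{equation*}
\dH(\vv{x},\vv{y}) = \wt(\vv{x}+\vv{y}) = \wt(\vv{x})+\wt(\vv{y})-2\,|\mathrm{supp}(\vv{x})\cap\mathrm{supp}(\vv{y})|.
\end{equation*}
Hence $\dH(\vv{x},\vv{y}) \equiv \wt(\vv{x})+\wt(\vv{y}) \pmod 2$.

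Now take any edge $(\vv{c}_i,\vv{c}_j)\in\mathcal{E}$. Since $\dH(\vv{c}_i,\vv{c}_j)=3$ is odd, the identity gives $\wt(\vv{c}_i)+\wt(\vv{c}_j)$ odd. So exactly one endpoint lies in $\mathcal{C}_e$ and the other in $\mathcal{C}_o$. No edge has both endpoints in the same part, so $G_3$ is bipartite with bipartition $\CH=\mathcal{C}_e\cup\mathcal{C}_o$. This argument does not use linearity of the Hamming code or the value of $n$.

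There is no real obstacle; the only thing to check is that the stated bipartition is nontrivial. For $n\geq 2$ the zero codeword lies in $\mathcal{C}_e$. The $[2^n-1,2^n-1-n,3]$ Hamming code contains weight-$3$ codewords: for $n\geq 3$ take three parity-check columns with $\vv{h}_a+\vv{h}_b=\vv{h}_c$, and for $n=2$ the code is the repetition code $\{000,111\}$. So $\mathcal{C}_o\neq\emptyset$. Connectivity of $G_3$ is claimed elsewhere in the paper but is not part of this proposition, so I would not prove it here.
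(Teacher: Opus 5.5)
Your proof is correct and uses the same parity argument as the paper: a distance-$3$ edge means $\wt(\vv{c}_i\oplus\vv{c}_j)$ is odd, which forces the two endpoints into opposite weight classes. Your identity $\wt(\vv{x}\oplus\vv{y})=\wt(\vv{x})+\wt(\vv{y})-2|\mathrm{supp}(\vv{x})\cap\mathrm{supp}(\vv{y})|$ makes explicit the step the paper leaves implicit, and, as you note, linearity is not actually needed.
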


\begin{IEEEproof}
We know that for a linear code, $\dH(\vv{c}_1,\vv{c}_2) =
\wt(\vv{c}_1 \oplus \vv{c}_2)$. If $\dH(\vv{c}_1,\vv{c}_2)=3$, then
$\wt(\vv{c}_1 \oplus \vv{c}_2)$ is odd, which implies $\wt(\vv{c}_1)$
and $\wt(\vv{c}_2)$ have opposite parity. Thus, partitioning the
codewords of a $[2^n-1,\,2^n-1-n,\,3]$-Hamming code by even/odd
weight yields a bipartition.
\end{IEEEproof}

We next show that $G_3$ is connected. The proof relies on the
following path lemma.

\begin{lemma}[Path Lemma]
\label{lem:path}
Let $\vv{v} \in \CH$ and let $\vv{v} = \vv{\alpha}_1 \oplus
\vv{\alpha}_2 \oplus \cdots \oplus \vv{\alpha}_k$ where
$\vv{\alpha}_i \in \CH$ and $\wt(\vv{\alpha}_i) = 3$ for all
$i \in [k]$. Define $\vv{\beta}_0 = \vv{0}$ and
$\vv{\beta}_i = \vv{\alpha}_1 \oplus \cdots \oplus \vv{\alpha}_i$
for $i \in [k]$. Then $\vv{\beta}_0 \to \vv{\beta}_1 \to \cdots
\to \vv{\beta}_k$ is a valid path in $G_3$.
\end{lemma}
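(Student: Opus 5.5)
The plan is to check the two requirements for $\vv{\beta}_0 \to \vv{\beta}_1 \to \cdots \to \vv{\beta}_k$ to be a path in $G_3$ directly from the definitions. First, every $\vv{\beta}_i$ must be a vertex of $G_3$, i.e., lie in $\CH$. Second, every consecutive pair $(\vv{\beta}_{i-1},\vv{\beta}_i)$, $i\in[k]$, must be an edge of $G_3$, i.e., be at Hamming distance exactly $3$. Linearity of the Hamming code gives both at once, so no further machinery is needed.

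For vertex membership we argue by induction on $i$. The base case is $\vv{\beta}_0=\vv{0}\in\CH$, since $\CH$ is a linear code. For the inductive step, $\vv{\beta}_i=\vv{\beta}_{i-1}\oplus\vv{\alpha}_i$ is a sum of two codewords, because $\vv{\beta}_{i-1}\in\CH$ by the inductive hypothesis and $\vv{\alpha}_i\in\CH$ by assumption. Closure of $\CH$ under addition then gives $\vv{\beta}_i\in\CH$. In particular $\vv{\beta}_k=\vv{v}$, so the path ends at the prescribed codeword.

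For the edges, we use the identity $\dH(\vv{x},\vv{y})=\wt(\vv{x}\oplus\vv{y})$, the same fact used in the proof of Proposition~\ref{prop:bipartite}. For each $i\in[k]$ we have $\vv{\beta}_{i-1}\oplus\vv{\beta}_i=\vv{\alpha}_i$, so
\[
\dH(\vv{\beta}_{i-1},\vv{\beta}_i)=\wt(\vv{\alpha}_i)=3 .
\]
Hence $(\vv{\beta}_{i-1},\vv{\beta}_i)\in\mathcal{E}$. This equation also implies $\vv{\beta}_{i-1}\neq\vv{\beta}_i$, so each step is a genuine edge between distinct vertices and never a loop.

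No step here is a real obstacle; the lemma follows immediately from linearity. The only point needing care is the meaning of ``path.'' The vertices $\vv{\beta}_i$ need not be distinct, so the sequence is in general a walk. For the intended use, proving that $G_3$ is connected, a walk suffices: any walk from $\vv{0}$ to $\vv{v}$ contains a simple path between the same endpoints, obtained by deleting closed sub-walks. We would add a remark to this effect, or state that ``path'' is meant in the walk sense.
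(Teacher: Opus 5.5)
Your proposal is correct and matches the paper's proof: each $\vv{\beta}_i$ lies in $\CH$ by linearity, and consecutive vertices are adjacent because $\dH(\vv{\beta}_{i-1},\vv{\beta}_i)=\wt(\vv{\alpha}_i)=3$. Your remark that the sequence is in general a walk rather than a simple path (e.g.\ $\vv{\alpha}_1=\vv{\alpha}_2$ gives $\vv{\beta}_2=\vv{\beta}_0$) is a valid refinement the paper glosses over, and it does not affect the connectivity argument of Proposition~\ref{prop:connected}.
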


\begin{IEEEproof}
Since the $[2^n-1,\,2^n-1-n,\,3]$-Hamming code is linear and
$\vv{\alpha}_i \in \CH$ for all $i$, each $\vv{\beta}_i$ is a sum
of codewords and hence a codeword, so every vertex in the proposed
path is a valid vertex of $G_3$. Consecutive vertices are adjacent
since
\begin{equation*}
  \dH(\vv{\beta}_i, \vv{\beta}_{i+1})
  = \wt(\vv{\beta}_i \oplus \vv{\beta}_{i+1})
  = \wt(\vv{\alpha}_{i+1}) = 3, \quad \forall\, i < k,
\end{equation*}
confirming that each consecutive pair $(\vv{\beta}_i,
\vv{\beta}_{i+1})$ is an edge in $G_3$.
\end{IEEEproof}

\begin{proposition}[Connectivity of $G_3$]
\label{prop:connected}
The distance-$3$ HCW graph $G_3$ of the
$[2^n-1,\,2^n-1-n,\,3]$-Hamming code is connected.
\end{proposition}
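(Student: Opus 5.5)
The plan is to reduce connectivity to a generation statement and then use the Path Lemma (Lemma~\ref{lem:path}). Because $G_3$ is a Cayley-type graph on the additive group $\CH$, two codewords $\vv{c}_1,\vv{c}_2$ are joined by a path as soon as $\vv{c}_1\oplus\vv{c}_2$ is a sum of weight-$3$ codewords. Indeed, if $\vv{c}_1\oplus\vv{c}_2=\vv{\alpha}_1\oplus\cdots\oplus\vv{\alpha}_k$, Lemma~\ref{lem:path} gives a path from $\vv{0}$ to $\vv{c}_1\oplus\vv{c}_2$. Translating every vertex of that path by $\vv{c}_1$ preserves distances, since $\dH(\vv{x}\oplus\vv{c}_1,\vv{y}\oplus\vv{c}_1)=\dH(\vv{x},\vv{y})$, and linearity keeps each vertex in $\CH$. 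The result is a path from $\vv{c}_1$ to $\vv{c}_2$. So it suffices to show that the weight-$3$ codewords span $\CH$ over $\F_2$. (For $n=2$ the code is the repetition code $\{000,111\}$ and the claim is immediate, so we may take $n\ge 3$.)

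To prove the spanning claim, I would use the standard description of $\CH$ as the kernel of the $n\times(2^n-1)$ parity-check matrix $H$. The columns of $H$ are exactly the nonzero vectors $\vv{h}_1,\dots,\vv{h}_{2^n-1}$ of $\F_2^n$. A vector $\vv{c}$ is a codeword if and only if $\sum_{j\in\mathrm{supp}(\vv{c})}\vv{h}_j=\vv{0}$. The weight-$3$ codewords therefore correspond to triples $\{a,b,a\oplus b\}$ of distinct nonzero vectors, that is, to the lines of the projective geometry $PG(n-1,2)$.

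Now take an arbitrary nonzero codeword $\vv{c}$ with support $S$, and induct on $|S|$. Since $d=3$ we have $|S|\ge3$. If $|S|=3$, then $\vv{c}$ itself has weight $3$. If $|S|\ge4$, pick two distinct positions with column labels $a,b\in S$. The position labeled $a\oplus b$ is nonzero and distinct from both, so the triple $\{a,b,a\oplus b\}$ gives a weight-$3$ codeword $\vv{\alpha}$. Then $\vv{c}\oplus\vv{\alpha}$ is a codeword whose support contains neither $a$ nor $b$, and it contains $a\oplus b$ exactly when $a\oplus b\notin S$. Its weight is therefore $|S|-3$ or $|S|-1$, which is strictly smaller. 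It cannot be $\vv{0}$ unless $\vv{c}=\vv{\alpha}$, which is excluded because $|S|\ge4$. By induction $\vv{c}\oplus\vv{\alpha}$ is a sum of weight-$3$ codewords, and hence so is $\vv{c}$.

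The main point to check carefully is this weight-reduction step: that the decrease is strict and that the induction never lands on a codeword of weight $1$ or $2$. Both follow from $d_{\min}=3$, because every nonzero intermediate vector is a codeword. Combining the spanning claim with the translation argument above shows that every codeword is reachable from $\vv{0}$ in $G_3$, so $G_3$ is connected.
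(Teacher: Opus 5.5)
Your proposal is correct and follows the same route as the paper. Both arguments show that the weight-$3$ codewords span $\CH$ and then invoke the Path Lemma. The differences are minor: you prove the spanning fact yourself via the weight-reduction induction on the columns of $H$, whereas the paper simply cites it as well known, and you join $\vv{c}_1$ to $\vv{c}_2$ by translating the path to $\vv{c}_1\oplus\vv{c}_2$ rather than routing both through $\vv{0}$. Both variants are valid.
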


\begin{IEEEproof}
Let $W = \{\vv{c} \in \CH \mid \wt(\vv{c}) = 3\}$ be the set of
minimum-weight codewords. It is well known that the minimum-weight
codewords of a Hamming code generate the entire
code~\cite{macwilliams1977}, i.e., $\mathrm{span}(W) = \CH$.
Hence, for any $\vv{v} \in \CH$, there exist
$\vv{\alpha}_1, \ldots, \vv{\alpha}_k \in W$ such that
$\vv{v} = \vv{\alpha}_1 \oplus \cdots \oplus \vv{\alpha}_k$.
By Lemma~\ref{lem:path}, there exists a path
$\vv{0} \to \vv{\beta}_1 \to \cdots \to \vv{\beta}_k = \vv{v}$
in $G_3$, so every $\vv{v} \in \CH$ is reachable from $\vv{0}$.
Since reachability is symmetric and transitive, any
$\vv{v}, \vv{w} \in \CH$ are connected via $\vv{0}$, and $G_3$
is therefore connected.
\end{IEEEproof}

By Proposition~\ref{prop:bipartite}, $\CH$ admits a bipartition
$\CH = \mathcal{C}_o \cup \mathcal{C}_e$, where $\mathcal{C}_o$ and
$\mathcal{C}_e$ denote the subsets of Hamming codewords with odd and
even Hamming weights, respectively. Define two complementary
parity-pair sets
\[
  P_e \triangleq \{00,\,11\} \quad \text{and} \quad
  P_o \triangleq \{01,\,10\}.
\]
Note that $\dH(\vv{p},\vv{p}') = 1$ for any $\vv{p} \in P_e$ and
$\vv{p}' \in P_o$, so any cross-pair assignment satisfies the
distance-$1$ parity requirement of Theorem~\ref{thm:valid_dmin2}(b).

\begin{theorem}[Bipartite Parity Structure]
\label{thm:bipartite_parity}
A parity assignment for the $2^{2^n-1-n}$ Hamming codewords
constitutes a valid optimal SEFCC
$\mathcal{C} : \F_2^{2^n-1} \to \F_2^{2^n+1}$ achieving
$d_{\min}^*(\mathcal{C}) = 2$ if and only if the parities assigned
to all codewords in $\mathcal{C}_o$ are drawn exclusively from $P_o$
and the parities assigned to all codewords in $\mathcal{C}_e$ are
drawn exclusively from $P_e$ (or equivalently, with the roles of
$P_o$ and $P_e$ interchanged).
\end{theorem}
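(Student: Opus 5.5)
We reduce the statement to Theorem~\ref{thm:valid_dmin2}. Parity assignments to the non-codewords are forced by condition (a), so validity together with $d_{\min}^*(\mathcal{C})=2$ holds exactly when the codeword parities satisfy condition (b). The claim therefore becomes purely combinatorial: a map $\vv{p}:\CH\to\F_2^2$ satisfies $\dH(\vv{p}(\vv{c}_i),\vv{p}(\vv{c}_k))=1$ for every edge of $G_3$ if and only if one partite set of the bipartition in Proposition~\ref{prop:bipartite} takes values only in $P_e$ and the other only in $P_o$. Here "valid optimal" is read as valid with $d_{\min}=2$ in the sense of Theorem~\ref{thm:valid_dmin2}; the pair-count minimisation is handled later by the max-cut step.

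\emph{Sufficiency.} Suppose $\mathcal{C}_o$ receives parities from $P_o$ and $\mathcal{C}_e$ from $P_e$ (or the reverse). By Proposition~\ref{prop:bipartite}, every edge of $G_3$ joins $\mathcal{C}_e$ and $\mathcal{C}_o$. So every edge carries one parity from $P_e$ and one from $P_o$, and these are at distance $1$, as noted just before the theorem. Condition (b) holds, and extending to non-codewords via (a), Theorem~\ref{thm:valid_dmin2} yields validity and $d_{\min}^*=2$.

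\emph{Necessity.} Consider the weight-parity map $\pi:\F_2^2\to\F_2$, $\pi(\vv{p})=\wt(\vv{p}) \bmod 2$, so that $P_e=\pi^{-1}(0)$ and $P_o=\pi^{-1}(1)$. Two vectors at Hamming distance $1$ have opposite $\pi$-values. By condition (b), $\pi\circ\vv{p}$ therefore flips along every edge of $G_3$.

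By Proposition~\ref{prop:connected}, $G_3$ is connected. Fix $\vv{0}\in\mathcal{C}_e$ and take any codeword $\vv{v}$ with a path $\vv{0}=\vv{\beta}_0\to\cdots\to\vv{\beta}_k=\vv{v}$ (for example, from Lemma~\ref{lem:path}). Then $\pi(\vv{p}(\vv{v}))=\pi(\vv{p}(\vv{0}))+k \bmod 2$. Each step changes codeword weight parity, because the two endpoints differ by a weight-$3$ vector. Hence $k\equiv \wt(\vv{v}) \pmod 2$, and
\[
\pi(\vv{p}(\vv{v}))\equiv \pi(\vv{p}(\vv{0}))+\wt(\vv{v}) \pmod 2 .
\]
If $\vv{p}(\vv{0})\in P_e$, all of $\mathcal{C}_e$ maps into $P_e$ and all of $\mathcal{C}_o$ into $P_o$. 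If $\vv{p}(\vv{0})\in P_o$, the roles are interchanged. This is the 2-colouring uniqueness of a connected bipartite graph.

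The only potentially delicate step is connectivity. Without it, different components could independently swap $P_e$ and $P_o$, and the "exclusively" claim would fail. Proposition~\ref{prop:connected} already settles this, so the remaining work is a short parity bookkeeping argument.
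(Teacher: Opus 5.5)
Your proof is correct and follows the paper's argument. Sufficiency holds because every $G_3$ edge crosses the bipartition and any $P_e$--$P_o$ pair is at distance $1$; necessity comes from propagating the forced $P_e$/$P_o$ alternation from $\vv{0}$ across the connected graph $G_3$ (Proposition~\ref{prop:connected}). Your weight-parity map $\pi$ and path-length parity bookkeeping just make the paper's breadth-first propagation explicit, and reading ``optimal'' as ``valid with $d_{\min}=2$'' matches what the paper's proof actually establishes.
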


\begin{IEEEproof}
$(\Rightarrow)$: Let $\mathcal{C}$ be a valid optimal SEFCC achieving
$d_{\min}^*(\mathcal{C}) = 2$. Without loss of generality, suppose
$\vv{0} \in \mathcal{C}_e$ is assigned a parity from $P_e$. By
Theorem~\ref{thm:valid_dmin2}(b), every distance-$3$ neighbour of
$\vv{0}$ in $\mathcal{C}_o$ must be assigned a parity at Hamming
distance $1$ from that of $\vv{0}$, forcing their parities into $P_o$.
Their distance-$3$ neighbours in $\mathcal{C}_e$ are in turn forced
into $P_e$, and so on. Since $G_3$ is connected
(Proposition~\ref{prop:connected}), this propagation---equivalent to
a breadth-first traversal of $G_3$ from $\vv{0}$---reaches every
codeword: vertices at even distance from $\vv{0}$ lie in
$\mathcal{C}_e$ and receive parities from $P_e$, while vertices at
odd distance lie in $\mathcal{C}_o$ and receive parities from $P_o$.

$(\Leftarrow)$: Conversely, if all codewords in $\mathcal{C}_o$
are assigned parities from $P_o$ and all codewords in $\mathcal{C}_e$
from $P_e$, then for any pair $(\vv{c}_i, \vv{c}_k)$ with
$\dH(\vv{c}_i,\vv{c}_k) = 3$, the two codewords lie in different
partite sets and are thus assigned parities from different sets
$P_e$ and $P_o$. Since $\dH(\vv{p},\vv{p}') = 1$ for any
$\vv{p} \in P_e,\, \vv{p}' \in P_o$, the condition of
Theorem~\ref{thm:valid_dmin2}(b) is satisfied, and the assignment
constitutes a valid optimal SEFCC with $d_{\min}^*(\mathcal{C}) = 2$.
\end{IEEEproof}

\subsection{Minimizing Distance-$2$ Pairs via Spectral Graph Theory}
\label{subsec:spectral}

Theorems~\ref{thm:valid_dmin2}a) and~\ref{thm:bipartite_parity} together
completely characterize valid optimal SEFCCs
$\mathcal{C}:\F_2^{2^n-1}\to\F_2^{2^n+1}$ achieving
$d_{\min}^*(\mathcal{C})=2$. Assuming a parity assignment following
Theorem~\ref{thm:bipartite_parity}, we now turn to
constraint~\eqref{eq:opt_pairs}: minimizing the number of unordered pairs
$\{\vv{u},\vv{v}\}\subset\F_2^{2^n-1}$ satisfying
$\dH(\Enc(\vv{u}),\Enc(\vv{v}))=2$. Such distance-$2$ pairs originate from exactly three sources:
\begin{enumerate}[nosep, topsep=0pt, partopsep=0pt]
\item[(i)] \textbf{Intra-sphere pairs:} Two NHCWs $\vv{u}_{i,a},\vv{u}_{i,b}$
within the same sphere $S_i$. Here $\dH(\vv{u}_{i,a},\vv{u}_{i,b})=2$ and
the parity distance is $0$ as both share $\overline{\pC(\vv{c}_i)}$.
There are $2^{2^n-1-n}\times\binom{2^n-1}{2}$ such pairs, invariant under
any valid parity assignment.

\item[(ii)] \textbf{Inter-sphere boundary pairs:} NHCWs
$\vv{u}_{i,a}\in S_i$ and $\vv{u}_{k,b}\in S_k$ with
$\dH(\vv{c}_i,\vv{c}_k)=3$, for which $\dH(\vv{u}_{i,a},\vv{u}_{k,b})=1$
and $\dH(\pC(\vv{c}_i),\pC(\vv{c}_k))=1$ (since parities are drawn from
disjoint sets $P_e$ and $P_o$). This count is invariant under any parity
assignment satisfying Theorem~\ref{thm:bipartite_parity} and Theorem~\ref{thm:valid_dmin2}(a)

\item[(iii)] \textbf{Identical-parity pairs:} NHCWs $\vv{u}_{i,a}\in S_i$
and $\vv{u}_{k,b}\in S_k$ with $\dH(\vv{c}_i,\vv{c}_k)=4$ and
$\pC(\vv{c}_i)=\pC(\vv{c}_k)$. Here $\dH(\vv{u}_{i,a},\vv{u}_{k,b})=2$
and the parity distance is $0$.
\end{enumerate}

Since categories~(i) and~(ii) are invariant, minimizing the total count of
distance-$2$ pairs reduces to minimizing the count in category~(iii). One
can verify that each pair $\{\vv{c}_i,\vv{c}_k\}$ with
$\dH(\vv{c}_i,\vv{c}_k)=4$ and $\pC(\vv{c}_i)=\pC(\vv{c}_k)$ contributes
exactly $12$ identical-parity NHCW pairs at distance~$2$.

We next show that any HCW pair at Hamming distance~$4$ must lie within the
same partite set. Suppose $\vv{c}_i\in\mathcal{C}_e$ and
$\vv{c}_k\in\mathcal{C}_o$ with $\dH(\vv{c}_i,\vv{c}_k)=4$. Then
$\wt(\vv{c}_i\oplus\vv{c}_k)=4$ is even, which forces
$\wt(\vv{c}_i)\equiv\wt(\vv{c}_k)\pmod{2}$, contradicting the assumption
that $\vv{c}_i$ and $\vv{c}_k$ lie in opposite partite sets. Hence all
distance-$4$ HCW pairs lie entirely within $\mathcal{C}_e$ or within
$\mathcal{C}_o$. This means that after assigning $\mathcal{C}_e$ a complementary parity
pair set $P\in\{P_e,P_o\}$ and assigning $\mathcal{C}_o$ the remaining
set $\{P_e,P_o\}\setminus P$, we can independently minimize the number
of same-parity pairs at distance~$4$ within $\mathcal{C}_e$ and
$\mathcal{C}_o$, since no distance-$4$ edge crosses the bipartition. We thus focus exclusively on finding the optimal parity assignemnts for $\mathcal{C}_e$ for now.  We identify the structure of the
distance-$4$ graph on $\mathcal{C}_e$ and formulate the minimization as a
max-cut problem, whose solution is characterized by eigenvectors of the
minimum eigenvalue of the graph's adjacency matrix.

\begin{theorem}[Max-Cut Formulation]
\label{thm:maxcut}
Let $\vv{v}_1,\ldots,\vv{v}_{|\mathcal{C}_e|}$ be the HCWs in
$\mathcal{C}_e$, where $|\mathcal{C}_e|=2^{2^n-n-2}$. Let
$G_4=(\mathcal{C}_e,\mathcal{E}_4)$ be the graph with
$\{\vv{v}_i,\vv{v}_j\}\in\mathcal{E}_4$ iff $\dH(\vv{v}_i,\vv{v}_j)=4$,
and let $A$ be the adjacency matrix of $G_4$. Given that parity assignments to HCWs in $\mathcal{C}_e$ are drawn from $\{\vv{p},\bar{\vv{p}}\}\in\{P_e,P_o\}$, define the cut vector $\vv{z}\in\{+1,-1\}^{|\mathcal{C}_e|}$ by $z_i=+1$ if $\pC(\vv{v}_i)=\vv{p}$ and $z_i=-1$ if $\pC(\vv{v}_i)=\bar{\vv{p}}$. Then the number of same-parity pairs in $\mathcal{E}_4$ is minimized if and only if $\vv{z}$ minimizes $\vv{z}^\top A\vv{z}$.
\end{theorem}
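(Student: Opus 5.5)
The plan is to write the number of same-parity pairs in $\mathcal{E}_4$ as an affine function of $\vv{z}^\top A\vv{z}$ with a positive slope. The claimed equivalence then follows at once, since $\vv{z}$ ranges over the same feasible set $\{+1,-1\}^{|\mathcal{C}_e|}$ in both problems. First I would note that the feasible set really is this full hypercube. By Theorem~\ref{thm:bipartite_parity}, every codeword of $\mathcal{C}_e$ may receive either element of the chosen set $\{\vv{p},\bar{\vv{p}}\}$, independently of the others. Validity and $d_{\min}^*=2$ are preserved for every such choice, because Theorem~\ref{thm:valid_dmin2}(a) fixes the parities of the NHCWs from those of the HCWs, and condition (b) only involves distance-$3$ pairs, which cross the bipartition. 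So the map from parity assignments on $\mathcal{C}_e$ to cut vectors $\vv{z}$ is a bijection onto $\{+1,-1\}^{|\mathcal{C}_e|}$.

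Next comes the counting identity. For an edge $\{\vv{v}_i,\vv{v}_j\}\in\mathcal{E}_4$, the indicator that the two endpoints carry the same parity is $\tfrac{1}{2}(1+z_iz_j)$, since $z_iz_j=+1$ exactly when $\pC(\vv{v}_i)=\pC(\vv{v}_j)$ and $-1$ otherwise. Summing over edges, and using that $A$ is symmetric with zero diagonal (so $\vv{z}^\top A\vv{z}=2\sum_{\{i,j\}\in\mathcal{E}_4} z_iz_j$), gives
\begin{equation*}
N_{\mathrm{same}}(\vv{z}) \;=\; \sum_{\{i,j\}\in\mathcal{E}_4}\tfrac{1}{2}\bigl(1+z_iz_j\bigr) \;=\; \tfrac{1}{2}|\mathcal{E}_4| + \tfrac{1}{4}\,\vv{z}^\top A\vv{z}.
\end{equation*}
The term $|\mathcal{E}_4|$ does not depend on the assignment. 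Hence $N_{\mathrm{same}}$ is a strictly increasing affine function of $\vv{z}^\top A\vv{z}$, so a vector $\vv{z}$ minimizes one if and only if it minimizes the other. Equivalently, the number of cut edges $\tfrac{1}{2}|\mathcal{E}_4|-\tfrac{1}{4}\vv{z}^\top A\vv{z}$ is maximized, which justifies the max-cut name. Combined with the preceding observation that each same-parity distance-$4$ pair contributes exactly $12$ category-(iii) pairs, this links the theorem back to objective~\eqref{eq:opt_pairs}.

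The argument is short, and the only step needing care is the first one: confirming that no hidden constraint restricts $\vv{z}$ beyond $\{+1,-1\}^{|\mathcal{C}_e|}$. A second point is that the choice of the reference element $\vv{p}$ (versus $\bar{\vv{p}}$) only flips $\vv{z}\mapsto-\vv{z}$, which leaves $\vv{z}^\top A\vv{z}$ unchanged, so the statement does not depend on this labeling. I would check both explicitly against Theorems~\ref{thm:valid_dmin2} and~\ref{thm:bipartite_parity}, and then write out the edge-indicator identity above.
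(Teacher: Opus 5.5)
Your proposal is correct and follows the paper's proof. Both rest on $\vv{z}^\top A\vv{z}=2\sum_{\{i,j\}\in\mathcal{E}_4}z_iz_j$, which gives $|\mathcal{E}_s|=\tfrac{1}{2}|\mathcal{E}_4|+\tfrac{1}{4}\vv{z}^\top A\vv{z}$; the paper states the equivalent cut-edge form $|\mathcal{E}_c|=\tfrac{1}{2}|\mathcal{E}_4|-\tfrac{1}{4}\vv{z}^\top A\vv{z}$. Your additional checks, that every $\vv{z}\in\{+1,-1\}^{|\mathcal{C}_e|}$ is feasible and that swapping $\vv{p}$ with $\bar{\vv{p}}$ only sends $\vv{z}\mapsto-\vv{z}$, are sound and make explicit what the paper leaves implicit.
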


\begin{IEEEproof}
An edge $\{\vv{v}_i,\vv{v}_j\}\in\mathcal{E}_4$ is a \emph{cut edge} if
$z_iz_j=-1$ (distinct parities) and a \emph{same-parity edge} if $z_iz_j=+1$.
Denoting the number of cut and same-parity edges by $|\mathcal{E}_c|$ and
$|\mathcal{E}_s|$, respectively, we have
\begin{equation*}
  \vv{z}^\top A\vv{z}
  = 2\!\!\!\sum_{\{\vv{v}_i,\vv{v}_j\}\in\mathcal{E}_4}\!\!\! z_iz_j
  = 2|\mathcal{E}_s| - 2|\mathcal{E}_c|.
\end{equation*}
Since $|\mathcal{E}_s|+|\mathcal{E}_c|=|\mathcal{E}_4|$, we obtain
$|\mathcal{E}_c| = |\mathcal{E}_4|/2 - \vv{z}^\top A\vv{z}/4$,
so $|\mathcal{E}_c|$ is maximized if and only if $\vv{z}^\top A\vv{z}$
is minimized.
\end{IEEEproof}

We now establish that $G_4$ is a Cayley graph on $(\mathcal{C}_e,\oplus)$,
a structure that enables explicit computation of the eigenvectors of~$A$.

\begin{lemma}[Cayley Graph Structure of $G_4$]
\label{lem:cayley}
Let $\mathcal{S}_4\triangleq\{\vv{v}\in\mathcal{C}_e\mid\wt(\vv{v})=4\}$.
Then $G_4=\mathrm{Cay}(\mathcal{C}_e,\oplus,\mathcal{S}_4)$, i.e., $G_4$
is the Cayley graph of the group $(\mathcal{C}_e,\oplus)$ with generating
set $\mathcal{S}_4$. The characters of $(\mathcal{C}_e,\oplus)$ are the
homomorphisms $\chi_{\vv{u}}:\mathcal{C}_e\to\{+1,-1\}$ defined by
$\chi_{\vv{u}}(\vv{v})=(-1)^{\vv{u}\cdot\vv{v}}$,
for $\vv{u}\in\F_2^{2^n-1}$.
\end{lemma}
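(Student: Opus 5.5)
The plan is to verify three things in turn: that $(\mathcal{C}_e,\oplus)$ is a group, that adjacency in $G_4$ is exactly "differing by an element of $\mathcal{S}_4$", and that the maps $\chi_{\vv{u}}$ are homomorphisms which together exhaust the characters.

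\textbf{Group structure.} The Hamming code is linear, and $\wt(\vv{a}\oplus\vv{b})\equiv\wt(\vv{a})+\wt(\vv{b})\pmod 2$. Hence the even-weight codewords are closed under $\oplus$, contain $\vv{0}$, and every element is its own inverse. So $\mathcal{C}_e$ is a subgroup of $\CH$; in fact it is a subspace of $\F_2^{2^n-1}$. It has index at most $2$ in $\CH$, and index exactly $2$ because the code contains weight-$3$ words. This gives $|\mathcal{C}_e|=2^{2^n-n-2}$.

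\textbf{Cayley identification.} For $\vv{v}_i,\vv{v}_j\in\mathcal{C}_e$ we have $\dH(\vv{v}_i,\vv{v}_j)=\wt(\vv{v}_i\oplus\vv{v}_j)$, and $\vv{v}_i\oplus\vv{v}_j\in\mathcal{C}_e$ by closure. Therefore $\{\vv{v}_i,\vv{v}_j\}\in\mathcal{E}_4$ iff $\vv{v}_i\oplus\vv{v}_j\in\mathcal{S}_4$, which is precisely the edge rule of $\mathrm{Cay}(\mathcal{C}_e,\oplus,\mathcal{S}_4)$. The set $\mathcal{S}_4$ is symmetric, since $\vv{s}^{-1}=\vv{s}$ in characteristic $2$, and it excludes $\vv{0}$. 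So the Cayley graph is undirected and loopless, matching $G_4$. Note that ``generating set'' here is in the Cayley-graph sense; we do not need to show $\mathcal{S}_4$ generates $\mathcal{C}_e$.

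\textbf{Characters.} Each $\chi_{\vv{u}}$ is a homomorphism into $\{\pm1\}$, because $\vv{u}\cdot(\vv{v}\oplus\vv{w})=\vv{u}\cdot\vv{v}+\vv{u}\cdot\vv{w}$ over $\F_2$. Conversely, every character of the elementary abelian $2$-group $\mathcal{C}_e$ takes values in $\{\pm1\}$, so it has the form $(-1)^{\lambda(\vv{v})}$ for some linear functional $\lambda$ on the subspace $\mathcal{C}_e$. Such a functional extends to a linear functional on $\F_2^{2^n-1}$ by extending a basis, and every linear functional on $\F_2^{2^n-1}$ is $\vv{v}\mapsto\vv{u}\cdot\vv{v}$ for some $\vv{u}$. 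This shows every character has the form $\chi_{\vv{u}}$.

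These characters are not distinct: $\chi_{\vv{u}}=\chi_{\vv{u}'}$ iff $\vv{u}\oplus\vv{u}'\in\mathcal{C}_e^{\perp}$. So the distinct characters are indexed by $\F_2^{2^n-1}/\mathcal{C}_e^{\perp}$, and there are exactly $|\mathcal{C}_e|$ of them, as required. This surjectivity and indexing step is the only nontrivial point and the one I expect to need the most care. The rest is direct verification.
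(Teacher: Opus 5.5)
Your proposal is correct and follows essentially the same route as the paper. Closure of $\mathcal{C}_e$ under $\oplus$ turns adjacency into the rule $\vv{v}_i\oplus\vv{v}_j\in\mathcal{S}_4$; characters are $\pm1$-valued, hence of the form $(-1)^{\lambda(\vv{v})}$ for a linear functional $\lambda$ that extends to $\vv{v}\mapsto\vv{u}\cdot\vv{v}$ on $\F_2^{2^n-1}$. Your extra checks go beyond the paper's proof but are all correct: the weight-parity closure, the index-$2$ count, and the observation that $\chi_{\vv{u}}=\chi_{\vv{u}'}$ iff $\vv{u}\oplus\vv{u}'\in\mathcal{C}_e^{\perp}$; the last one usefully makes explicit that the $|\mathcal{C}_e|$ distinct characters are indexed by cosets of $\mathcal{C}_e^{\perp}$ rather than by all of $\F_2^{2^n-1}$.
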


\begin{IEEEproof}
Since the $[2^n-1,2^n-1-n,3]$-Hamming code is linear,
$(\mathcal{C}_e,\oplus)$ is an abelian group. We verify the Cayley graph
characterization. If $\vv{x}\sim\vv{y}$ in $G_4$, then
$\wt(\vv{x}\oplus\vv{y})=4$; since $\mathcal{C}_e$ is closed under $\oplus$,
$\vv{x}\oplus\vv{y}\in\mathcal{S}_4$. Conversely, if
$\vv{x}\in\mathcal{C}_e$ and $\vv{s}\in\mathcal{S}_4$, then
$\vv{y}=\vv{x}\oplus\vv{s}\in\mathcal{C}_e$ and
$\dH(\vv{x},\vv{y})=\wt(\vv{s})=4$, so $\vv{x}\sim\vv{y}$. 

For the character classification: since every $\vv{v}\in\mathcal{C}_e$
satisfies $\vv{v}\oplus\vv{v}=\vv{0}$, any character
$\chi:\mathcal{C}_e\to\mathbb{C}^\times$ satisfies
$\chi(\vv{v})^2=\chi(\vv{0})=1$, so $\chi(\vv{v})\in\{+1,-1\}$ for all
$\vv{v}$. Writing $\chi(\vv{v})=(-1)^{f(\vv{v})}$ for some
$f:\mathcal{C}_e\to\F_2$, the homomorphism condition
$\chi(\vv{x}\oplus\vv{y})=\chi(\vv{x})\chi(\vv{y})$ becomes
$f(\vv{x}\oplus\vv{y})\equiv f(\vv{x})+f(\vv{y})\pmod{2}$, so $f$ is a
linear functional on $\mathcal{C}_e$. Any linear functional on the subspace
$\mathcal{C}_e\subseteq\F_2^{2^n-1}$ extends to $\F_2^{2^n-1}$, where it
takes the form $f(\vv{v})=\vv{u}\cdot\vv{v}$ for some
$\vv{u}\in\F_2^{2^n-1}$~\cite{macwilliams1977}. Hence every character has
the form $\chi_{\vv{u}}(\vv{v})=(-1)^{\vv{u}\cdot\vv{v}}$.
\end{IEEEproof}

\begin{corollary}[Eigenvectors of $A$]
\label{cor:eigvecs}
For each $\vv{u}\in\F_2^{2^n-1}$, the vector
\begin{equation}
  \vv{e}_{\vv{u}} \triangleq
  \bigl[(-1)^{\vv{u}\cdot\vv{v}_1},\;(-1)^{\vv{u}\cdot\vv{v}_2},\;\ldots,\;
  (-1)^{\vv{u}\cdot\vv{v}_{|\mathcal{C}_e|}}\bigr]^\top
  \label{eq:eigvec}
\end{equation}
is an eigenvector of $A$ with eigenvalue
\begin{equation}
  \lambda_{\vv{u}} \triangleq \sum_{\vv{s}\in\mathcal{S}_4}(-1)^{\vv{u}\cdot\vv{s}}.
  \label{eq:eigenvalue}
\end{equation}
The collection $\mathcal{Q}\triangleq\{\vv{e}_{\vv{u}}\mid\vv{u}\in\F_2^{2^n-1}\}$
comprises $|\mathcal{C}_e|=2^{2^n-n-2}$ distinct, mutually orthogonal
vectors forming an eigenbasis for $\mathbb{R}^{|\mathcal{C}_e|}$
with respect to $A$. Denote the minimum eigenvalue by
$\lambda_{\min}\triangleq\min_{\vv{u}\in\F_2^{2^n-1}}\lambda_{\vv{u}}$.
\end{corollary}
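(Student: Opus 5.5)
The plan has three steps: first the standard Cayley-graph eigenvector computation, then a counting argument on the characters, then orthogonality.

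\emph{Step 1: each $\vv{e}_{\vv{u}}$ is an eigenvector.} I will use Lemma~\ref{lem:cayley}. Fix $\vv{u}\in\F_2^{2^n-1}$ and a vertex $\vv{v}_i\in\mathcal{C}_e$. Since $G_4=\mathrm{Cay}(\mathcal{C}_e,\oplus,\mathcal{S}_4)$, the neighbours of $\vv{v}_i$ are exactly $\vv{v}_i\oplus\vv{s}$ for $\vv{s}\in\mathcal{S}_4$, and these are distinct. Note that $\mathcal{S}_4$ is closed under inversion, since $\vv{s}=\vv{s}^{-1}$, so $A$ is symmetric. Then
\[
(A\vv{e}_{\vv{u}})_i=\sum_{\vv{s}\in\mathcal{S}_4}(-1)^{\vv{u}\cdot(\vv{v}_i\oplus\vv{s})}
=(-1)^{\vv{u}\cdot\vv{v}_i}\sum_{\vv{s}\in\mathcal{S}_4}(-1)^{\vv{u}\cdot\vv{s}}
=\lambda_{\vv{u}}\,(\vv{e}_{\vv{u}})_i ,
\]
using bilinearity of the dot product over $\F_2$. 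The vector $\vv{e}_{\vv{u}}$ has entries $\pm1$, so it is nonzero. This gives \eqref{eq:eigvec}--\eqref{eq:eigenvalue}.

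\emph{Step 2: count the distinct vectors.} I will show that $\vv{e}_{\vv{u}}=\vv{e}_{\vv{u}'}$ if and only if $\vv{u}\oplus\vv{u}'\in\mathcal{C}_e^{\perp}$. Hence the distinct vectors in $\mathcal{Q}$ are indexed by the quotient $\F_2^{2^n-1}/\mathcal{C}_e^\perp$. This quotient has size $2^{\dim\mathcal{C}_e}=|\mathcal{C}_e|=2^{2^n-n-2}$. That $\dim\mathcal{C}_e=2^n-n-2$ follows because $\mathcal{C}_e$ is the kernel of the weight-parity functional on $\CH$, and this functional is nonzero because $\CH$ contains weight-$3$ words. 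Equivalently, the map $\vv{u}+\mathcal{C}_e^\perp\mapsto\chi_{\vv{u}}$ is a bijection onto the character group. Lemma~\ref{lem:cayley} gives surjectivity, and injectivity holds because $\chi_{\vv{u}}\equiv 1$ on $\mathcal{C}_e$ exactly when $\vv{u}\in\mathcal{C}_e^\perp$.

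\emph{Step 3: orthogonality and the eigenbasis.} For distinct characters $\chi_{\vv{u}}\neq\chi_{\vv{u}'}$, the product $\chi_{\vv{u}\oplus\vv{u}'}$ is a nontrivial character. So there is some $\vv{w}\in\mathcal{C}_e$ with $\chi_{\vv{u}\oplus\vv{u}'}(\vv{w})=-1$. Write $S=\sum_{\vv{v}\in\mathcal{C}_e}\chi_{\vv{u}\oplus\vv{u}'}(\vv{v})$. Translating the summation variable by $\vv{w}$ gives $S=-S$, so $S=0$, and therefore $\langle\vv{e}_{\vv{u}},\vv{e}_{\vv{u}'}\rangle=0$. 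We then have $|\mathcal{C}_e|$ nonzero, mutually orthogonal eigenvectors in $\mathbb{R}^{|\mathcal{C}_e|}$. They are linearly independent, so they form an orthogonal eigenbasis of $A$, and $\lambda_{\min}$ is well defined as the minimum over $\vv{u}$.

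The main obstacle is Step 2. The statement indexes $\mathcal{Q}$ by all of $\F_2^{2^n-1}$, so I must identify the repeats carefully: they come from cosets of $\mathcal{C}_e^\perp$. I must also verify the dimension count for $\mathcal{C}_e$ so that exactly $|\mathcal{C}_e|$ distinct vectors remain.
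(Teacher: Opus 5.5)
Your proof is correct and takes the same route as the paper. The paper simply cites the standard spectral theory of Cayley graphs on abelian groups: characters are eigenvectors, the eigenvalue is the character sum over $\mathcal{S}_4$, and the $|\mathcal{C}_e|$ distinct characters are orthogonal. Your version spells out what the paper leaves implicit, namely that the repeats in $\mathcal{Q}$ are exactly the cosets of $\mathcal{C}_e^\perp$ (so there are exactly $|\mathcal{C}_e|$ distinct vectors) and that orthogonality follows from the translation argument.
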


\begin{IEEEproof}
This follows from the standard spectral theory of Cayley graphs on abelian
groups: the characters of the group are eigenvectors of the adjacency
matrix, with eigenvalue equal to the sum of the character over the
generating set~\cite{macwilliams1977}. The $|\mathcal{C}_e|$ distinct
characters of $(\mathcal{C}_e,\oplus)$ yield $|\mathcal{C}_e|$ mutually
orthogonal eigenvectors spanning $\mathbb{R}^{|\mathcal{C}_e|}$.
\end{IEEEproof}

\begin{theorem}[Optimal Parity Assignment via Eigenvectors]
\label{thm:eigvec_opt}
Any eigenvector $\vv{e}_{\vv{u}}\in\mathcal{Q}$ or its negative
$-\vv{e}_{\vv{u}}$ corresponding to $\lambda_{\min}$ is a valid cut
vector $\vv{z}\in\{+1,-1\}^{|\mathcal{C}_e|}$ and minimizes
$\vv{z}^\top A\vv{z}$. Such eigenvectors yield the optimal parity
assignment for $\mathcal{C}_e$, i.e., a parity assignment satisfying
constraint~\eqref{eq:opt_pairs}.
\end{theorem}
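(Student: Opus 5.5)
The plan is a Rayleigh-quotient argument: show that $\lambda_{\min}N$, where $N\triangleq|\mathcal{C}_e|$, is a lower bound on $\vv{z}^\top A\vv{z}$ over all of $\mathbb{R}^{N}$ restricted to vectors of norm $\sqrt{N}$, and that the character eigenvectors meet this bound while already being $\pm1$-valued.

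First, the validity claim. By \eqref{eq:eigvec}, every entry of $\vv{e}_{\vv{u}}$ is $(-1)^{\vv{u}\cdot\vv{v}_i}\in\{+1,-1\}$. Hence $\vv{e}_{\vv{u}}$ and $-\vv{e}_{\vv{u}}$ both lie in $\{+1,-1\}^{N}$ and are admissible cut vectors in the sense of Theorem~\ref{thm:maxcut}. In the induced parity assignment, $\vv{v}_i$ receives $\vv{p}$ or $\bar{\vv{p}}$ according to the sign of the $i$-th entry, so it stays within the complementary pair demanded by Theorem~\ref{thm:bipartite_parity}.

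Second, the lower bound. By Corollary~\ref{cor:eigvecs}, $A$ is real symmetric with orthogonal eigenbasis $\mathcal{Q}$. Every $\vv{e}_{\vv{u}}$ has squared norm $N$, so the normalized vectors $\vv{e}_{\vv{u}}/\sqrt{N}$ form an orthonormal basis. Expand an arbitrary $\vv{z}\in\{+1,-1\}^{N}$ as $\vv{z}=\sum_{\vv{u}} a_{\vv{u}}\vv{e}_{\vv{u}}/\sqrt{N}$, with one representative $\vv{u}$ per distinct character. Then
\[
\vv{z}^\top A\vv{z}=\sum_{\vv{u}}\lambda_{\vv{u}}a_{\vv{u}}^2\ \ge\ \lambda_{\min}\sum_{\vv{u}}a_{\vv{u}}^2=\lambda_{\min}\|\vv{z}\|^2=\lambda_{\min}N .
\]
This uses $\|\vv{z}\|^2=N$ for every sign vector.

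Third, attainment. For any $\vv{u}$ with $\lambda_{\vv{u}}=\lambda_{\min}$ we have $\vv{e}_{\vv{u}}^\top A\vv{e}_{\vv{u}}=\lambda_{\min}\|\vv{e}_{\vv{u}}\|^2=\lambda_{\min}N$, and the same holds for $-\vv{e}_{\vv{u}}$. So these sign vectors achieve the global lower bound and therefore minimize $\vv{z}^\top A\vv{z}$ over $\{+1,-1\}^{N}$.

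Finally, transfer to optimality. By Theorem~\ref{thm:maxcut}, minimizing $\vv{z}^\top A\vv{z}$ is equivalent to minimizing the number of same-parity pairs in $\mathcal{E}_4$. Each such pair contributes exactly $12$ category-(iii) distance-$2$ NHCW pairs, while the category-(i) and category-(ii) counts are invariant. Hence the assignment minimizes the distance-$2$ contribution from $\mathcal{C}_e$. No distance-$4$ edge crosses the bipartition, so this minimization decouples from the one on $\mathcal{C}_o$, and the resulting assignment satisfies \eqref{eq:opt_pairs} for $\mathcal{C}_e$.

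The only delicate point is making sure the Rayleigh bound uses a genuine orthonormal basis. Several $\vv{u}$ give the same character, namely those differing by an element of $\mathcal{C}_e^\perp$, so I will index the expansion by distinct characters only, as Corollary~\ref{cor:eigvecs} already does. The real substance here is the observation that the minimizer of the continuous relaxation happens to be integral, which is immediate from the fact that characters are $\pm1$-valued. The hard work of actually locating $\lambda_{\min}$ is deferred to the Walsh-coefficient analysis later in the paper.
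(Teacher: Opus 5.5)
Your proposal is correct and takes essentially the same route as the paper: the Rayleigh-quotient bound $\vv{z}^\top A\vv{z}\geq\lambda_{\min}\|\vv{z}\|^2=\lambda_{\min}|\mathcal{C}_e|$ holds for every sign vector and is attained by the $\pm1$-valued character eigenvectors $\pm\vv{e}_{\vv{u}}$ of $\lambda_{\min}$, and the result transfers to constraint~\eqref{eq:opt_pairs} via Theorem~\ref{thm:maxcut}. Your explicit orthonormal expansion over distinct characters, and your transfer step through the category (i)--(iii) count, spell out what the paper's shorter proof leaves implicit.
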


\begin{IEEEproof}
Since $A$ is real symmetric, the Rayleigh quotient gives
\begin{equation*}
  \vv{z}^\top A\vv{z} \geq \lambda_{\min}\|\vv{z}\|^2
  = \lambda_{\min}\,|\mathcal{C}_e|
\end{equation*}
for any $\vv{z}$, with equality if and only if $\vv{z}$ is an eigenvector
of $\lambda_{\min}$. To achieve this bound over
$\vv{z}\in\{+1,-1\}^{|\mathcal{C}_e|}$, the eigenvector must have entries
in $\{+1,-1\}$, i.e., it must be a valid cut vector. By
Corollary~\ref{cor:eigvecs}, every $\vv{e}_{\vv{u}}\in\mathcal{Q}$ has
entries $\pm 1$ by construction. Hence those $\vv{e}_{\vv{u}}\in\mathcal{Q}$ (or their negatives $-\vv{e}_{\vv{u}}$)
corresponding to $\lambda_{\min}$ are valid cut vectors achieving the lower
bound, and yield the optimal parity assignment for $\mathcal{C}_e$.
\end{IEEEproof}

\begin{remark}
Although any eigenvector of $\lambda_{\min}$ minimizes $\vv{z}^\top A\vv{z}$
over $\mathbb{R}^{|\mathcal{C}_e|}$, an arbitrary linear combination of eigenvectors in $\mathcal{Q}$ corresponding to $\lambda_{\min}$ need not have $\pm1$ entries and hence may
not define a valid parity assignment. The optimality result relies
specifically on the character vectors $\vv{e}_{\vv{u}}\in\mathcal{Q}$ (or their negatives $-\vv{e}_{\vv{u}}$),
which are guaranteed to be $\pm1$-valued.
\end{remark}
\subsection{Eigenvalue Analysis and Optimal Parity Assignments via Bent Functions}
\label{subsec:eigenvalues}
In this section, we exploit the algebraic structure of
$\mathcal{C}_e$ and its dual to reduce the computation of
$\lambda_{\min}$ and the corresponding optimal eigenvectors to an
optimization problem over the Walsh coefficients of a related function,
which we solve for even~$n$.
For notational convenience, let $q \triangleq 2^n$ throughout this
section; we substitute $q = 2^n$ back in the final result. We begin
with a lemma expressing the indicator function of $\mathcal{C}_e$ via
its dual code $\mathcal{C}_e^\perp$.

\begin{lemma}[Indicator Function of $\mathcal{C}_e$]
\label{lem:indicator}
For any $\vv{x} \in \F_2^{q-1}$, the indicator function for $\mathcal{C}_e$ can be expressed as 
\begin{equation}
  \mathbf{1}_{\mathcal{C}_e}(\vv{x})
  = \frac{1}{|\mathcal{C}_e^\perp|}
    \sum_{\vv{v}\,\in\,\mathcal{C}_e^\perp} (-1)^{\vv{v}\cdot\vv{x}}.
  \label{eq:indicator}
\end{equation}
\end{lemma}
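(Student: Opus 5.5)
The plan is to prove the identity as an instance of the standard character-sum orthogonality for a linear subspace, using the fact that $\mathcal{C}_e$ is itself a linear code. First I will note that $\mathcal{C}_e$ is a subspace of $\F_2^{q-1}$. It contains $\vv{0}$. It is closed under $\oplus$, since $\wt(\vv{x}\oplus\vv{y})\equiv\wt(\vv{x})+\wt(\vv{y})\pmod 2$ and $\CH$ is linear; this is the same parity fact used in Proposition~\ref{prop:bipartite}. Concretely, $\mathcal{C}_e=\CH\cap\{\vv{x}:\vv{1}\cdot\vv{x}=0\}$, so its dual $\mathcal{C}_e^\perp=\CH^\perp+\langle\vv{1}\rangle$ is well defined. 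We also have $(\mathcal{C}_e^\perp)^\perp=\mathcal{C}_e$, by the finite-dimensional double-dual property over $\F_2$ (dimensions $k$ and $q-1-k$).

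Next I will split the right-hand side of \eqref{eq:indicator} into two cases according to whether $\vv{x}\in\mathcal{C}_e$.

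\emph{Case $\vv{x}\in\mathcal{C}_e$.} Every $\vv{v}\in\mathcal{C}_e^\perp$ satisfies $\vv{v}\cdot\vv{x}=0$. Hence each term equals $1$, the sum equals $|\mathcal{C}_e^\perp|$, and the right-hand side is $1$.

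\emph{Case $\vv{x}\notin\mathcal{C}_e$.} Since $\mathcal{C}_e=(\mathcal{C}_e^\perp)^\perp$, there is some $\vv{w}\in\mathcal{C}_e^\perp$ with $\vv{w}\cdot\vv{x}=1$. The map $\vv{v}\mapsto\vv{v}\oplus\vv{w}$ is a bijection of $\mathcal{C}_e^\perp$ onto itself. Applying it turns the sum $\Sigma=\sum_{\vv{v}\in\mathcal{C}_e^\perp}(-1)^{\vv{v}\cdot\vv{x}}$ into $(-1)^{\vv{w}\cdot\vv{x}}\Sigma=-\Sigma$. So $\Sigma=0$, and the right-hand side is $0$.

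An equivalent way to see the second case is that $\vv{v}\mapsto\vv{v}\cdot\vv{x}$ is a nonzero linear functional on $\mathcal{C}_e^\perp$. Its kernel is therefore a hyperplane, so exactly half the terms are $+1$ and half are $-1$.

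The only step needing any care is the double-dual identity $(\mathcal{C}_e^\perp)^\perp=\mathcal{C}_e$, which is what guarantees the existence of $\vv{w}$. I will justify it by the inclusion $\mathcal{C}_e\subseteq(\mathcal{C}_e^\perp)^\perp$ together with equality of dimensions. The dimension count uses the nondegeneracy of the standard dot product on $\F_2^{q-1}$, which gives $\dim V^\perp=(q-1)-\dim V$ for any subspace $V$, even though $V\cap V^\perp$ may be nonzero. Everything else is routine.
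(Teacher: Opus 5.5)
Your proof is correct and follows essentially the same argument as the paper. Both split into the cases $\vv{x}\in\mathcal{C}_e$ and $\vv{x}\notin\mathcal{C}_e$, and in the second case both use $(\mathcal{C}_e^\perp)^\perp=\mathcal{C}_e$ to obtain a nontrivial character on $\mathcal{C}_e^\perp$ whose sum vanishes. The only difference is that you prove the two standard facts the paper simply cites: the vanishing sum via the shift $\vv{v}\mapsto\vv{v}\oplus\vv{w}$, and the double-dual identity via the dimension count.
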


\begin{IEEEproof}
If $\vv{x}\in\mathcal{C}_e$, then $\vv{v}\cdot\vv{x}=0$ for all
$\vv{v}\in\mathcal{C}_e^\perp$ by definition of the dual code, so
every term in the sum equals $1$ and the right-hand side equals $1$.
If $\vv{x}\notin\mathcal{C}_e$, the map $\vv{v}\mapsto\vv{v}\cdot\vv{x}$
is a non-trivial linear functional on $\mathcal{C}_e^\perp$ (since
$\vv{v}\cdot\vv{x}=0$ for all $\vv{v}\in\mathcal{C}_e^\perp$ would
imply $\vv{x}\in(\mathcal{C}_e^\perp)^\perp=\mathcal{C}_e$,
a contradiction). Any non-trivial character of a finite abelian group
sums to zero over the group, so
$\sum_{\vv{v}\in\mathcal{C}_e^\perp}(-1)^{\vv{v}\cdot\vv{x}}=0$ and
the right-hand side equals $0$.
\end{IEEEproof}
\begin{definition}[Truth Table]
\label{def:Truth}
The truth table of a function $f:\F_2^n\to\{0,1\}$ is the vector
$T=(f(\vv{x}_i))_{i=1}^{2^n}\in\F_2^{2^n}$, where $\vv{x}_i\in\F_2^n$
are in lexicographic order. Its punctured truth table in
$\F_2^{2^n-1}$ is obtained by removing the first bit, i.e., $f(\vv{0})$,
from $T$.
\end{definition}
\begin{proposition}[Eigenvalue Formula]
\label{prop:eigenvalue_formula}
Let $\vv{u}\in\F_2^{q-1}$ be the punctured truth table of a function
$f_{\vv{u}}:\F_2^n\to\{0,1\}$ satisfying $f_{\vv{u}}(\vv{0})=0$, and
let the Walsh coefficient of $f_{\vv{u}}$ at $\vv{a}\in\F_2^n$ be
defined as
\begin{equation}
  W_{f_{\vv{u}}}(\vv{a})
  \triangleq \sum_{\vv{x}\,\in\,\F_2^n}(-1)^{f_{\vv{u}}(\vv{x})+\vv{a}\cdot\vv{x}}.
  \label{eq:walsh}
\end{equation}
Then the eigenvalue $\lambda_{\vv{u}}$ of the adjacency matrix $A$ of
$G_4$ corresponding to the eigenvector $\vv{e}_{\vv{u}}$ satisfies
\begin{equation}
  24\,\lambda_{\vv{u}}
  = \frac{1}{2^n}\sum_{\vv{a}\,\in\,\F_2^n}
    \Bigl(W_{f_{\vv{u}}}(\vv{a})^4 - 4\,W_{f_{\vv{u}}}(\vv{a})^3\Bigr)
    - 3\cdot 2^{2n} + 14\cdot 2^n - 8.
  \label{eq:eigenvalue_formula}
\end{equation}
\end{proposition}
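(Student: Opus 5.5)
The plan is to evaluate the character sum $\lambda_{\vv{u}}=\sum_{\vv{s}\in\mathcal{S}_4}(-1)^{\vv{u}\cdot\vv{s}}$ from Corollary~\ref{cor:eigvecs} directly. We rewrite it as a sum over the columns of the parity-check matrix, and then over unrestricted ordered tuples in $\F_2^n$, where it becomes a Walsh convolution. Index the $q-1$ coordinates of $\F_2^{q-1}$ by the nonzero vectors of $\F_2^n$ in lexicographic order, so that the parity-check matrix has these vectors as its columns. Then a vector $\vv{s}$ is a codeword iff the columns in its support sum to $\vv{0}$. Every weight-$4$ codeword has even weight, so $\mathcal{S}_4$ is exactly the set of weight-$4$ Hamming codewords. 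These correspond bijectively to $4$-subsets $\{\vv{x}_1,\ldots,\vv{x}_4\}$ of distinct nonzero vectors with $\vv{x}_1+\vv{x}_2+\vv{x}_3+\vv{x}_4=\vv{0}$. Write $F(\vv{x})=(-1)^{f_{\vv{u}}(\vv{x})}$, so that $F(\vv{0})=1$. Since $\vv{u}$ is the punctured truth table, $\vv{u}\cdot\vv{s}=\sum_i f_{\vv{u}}(\vv{x}_i)$. Passing to ordered tuples therefore gives
\[
24\,\lambda_{\vv{u}}=\sum_{\substack{\vv{x}_1+\cdots+\vv{x}_4=\vv{0}\\ \vv{x}_i\ \text{distinct, nonzero}}}\prod_{i=1}^4 F(\vv{x}_i).
\]

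Next we use the convolution identity $\sum_{\vv{x}_1+\cdots+\vv{x}_k=\vv{0}}\prod_i F(\vv{x}_i)=\frac{1}{2^n}\sum_{\vv{a}}W_{f_{\vv{u}}}(\vv{a})^k$. This follows by inserting $\frac{1}{2^n}\sum_{\vv{a}}(-1)^{\vv{a}\cdot(\vv{x}_1+\cdots+\vv{x}_k)}$, the same character-orthogonality argument as in Lemma~\ref{lem:indicator} applied to $\F_2^n$. We apply it with $k=4$ and $k=3$. It then remains to subtract the degenerate ordered quadruples with zero sum, i.e. those that are not all distinct and nonzero.

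For $k=4$, we split the degenerate quadruples into two classes.
\begin{itemize}
\item \emph{Some repeated entry.} If $\vv{x}_i=\vv{x}_j$, the zero-sum condition forces the other two entries to be equal as well, giving the pattern $(\vv{x},\vv{x},\vv{y},\vv{y})$ under one of three pairings. Each such tuple has product $F(\vv{x})^2F(\vv{y})^2=1$. The count is $3q(q-1)+q=q(3q-2)$ (three pairings with $\vv{x}\neq\vv{y}$, plus the $q$ all-equal tuples).
\item \emph{All distinct, one entry zero.} The zero can occupy any of $4$ positions. The other three entries are then distinct nonzero vectors with zero sum, and each tuple contributes $F(\vv{0})\prod F=\prod F$ over those three entries.
\end{itemize}

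For the second class we run the same decomposition at $k=3$. Zero-sum triples that are not all distinct and nonzero must contain a $\vv{0}$ with the other two entries equal. There are $3(q-1)+1=3q-2$ such triples, each with product $1$. Hence the distinct-nonzero triple sum equals $\frac1q\sum_{\vv{a}}W^3-(3q-2)$. Combining,
\[
24\lambda_{\vv{u}}=\frac1q\sum_{\vv{a}}W^4-q(3q-2)-4\Bigl(\frac1q\sum_{\vv{a}}W^3-(3q-2)\Bigr).
\]
The constant terms are $-3q^2+2q+12q-8=-3q^2+14q-8$, which yields \eqref{eq:eigenvalue_formula} after substituting $q=2^n$.

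The main obstacle is the bookkeeping of degenerate tuples: making sure every zero-sum quadruple with a repetition has product exactly $1$, and that the two degenerate classes are disjoint. Both points rely on the observation that in characteristic $2$ a zero-sum quadruple with one coincidence pairs up entirely, and a zero-sum triple with a zero entry has its remaining two entries equal. I would verify the final constants against small cases, for example $n=2$ (where $\mathcal{S}_4=\emptyset$, so the right-hand side must vanish for every $f_{\vv{u}}$) and $f_{\vv{u}}\equiv 0$, as a sanity check.
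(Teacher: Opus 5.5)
Your proof is correct, and it takes a genuinely different route from the paper's.

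The paper works through the dual code:
\begin{itemize}
\item it expands $\mathbf{1}_{\mathcal{C}_e}$ over $\mathcal{C}_e^\perp$ (Lemma~\ref{lem:indicator}) and identifies $\mathcal{C}_e^\perp=\mathcal{RM}^*(1,n)$ (Appendix~A);
\item it rewrites $\lambda_{\vv{u}}$ as an average of $K_4$ evaluated at the distances from $\vv{u}$ to the affine functions, and merges the $c=0,1$ terms by Krawtchouk symmetry;
\item it expands $K_4$ as a quartic in $W_{f_{\vv{u}}}(\vv{a})$ (Appendix~B), then eliminates the quadratic and linear terms with $\sum_{\vv{a}}W_{f_{\vv{u}}}(\vv{a})=q$ and Parseval.
\end{itemize}

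You instead stay in $\F_2^n$. You parametrize weight-$4$ codewords as zero-sum $4$-sets of nonzero columns, apply Fourier inversion to ordered zero-sum tuples, and remove degenerate tuples by inclusion--exclusion. Your counts $q(3q-2)$ and $3q-2$ are correct, the two degenerate classes are indeed disjoint, and you use $F(\vv{0})=1$ in exactly the right places. The constants come out to $-3q^2+14q-8$ as required.

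Your argument is shorter and self-contained: it needs neither Appendix~A nor~B, nor Parseval. It also explains the $-4W^3$ term combinatorially. That term comes from quadruples containing $\vv{0}$, i.e.\ from weight-$3$ codewords. As a byproduct you get
\[
6\sum_{\vv{c}\in\CH,\,\wt(\vv{c})=3}(-1)^{\vv{u}\cdot\vv{c}}=\frac{1}{q}\sum_{\vv{a}}W_{f_{\vv{u}}}(\vv{a})^3-(3q-2).
\]
You also state explicitly a convention the paper uses only implicitly in Appendix~A: the columns of the parity-check matrix are in lexicographic order, matching the punctured truth table.

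What the paper's route buys is uniformity. The same steps go through verbatim with $K_k$ in place of $K_4$ for the distance-$k$ graph. Your enumeration of degenerate tuples would instead have to be redone for each larger $k$, and it becomes more intricate as $k$ grows.
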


\begin{IEEEproof}
From Corollary~\ref{cor:eigvecs},
\begin{equation*}
  \lambda_{\vv{u}}
  = \sum_{\vv{s}\,\in\,\mathcal{S}_4}(-1)^{\vv{u}\cdot\vv{s}}
  = \sum_{\substack{\vv{x}\,\in\,\F_2^{q-1} \\ \wt(\vv{x})=4}}
    \mathbf{1}_{\mathcal{C}_e}(\vv{x})\,(-1)^{\vv{u}\cdot\vv{x}}.
\end{equation*}
Substituting Lemma~\ref{lem:indicator} and exchanging the order of
summation,
\begin{equation}
\begin{split}
  \lambda_{\vv{u}}
  &= \frac{1}{|\mathcal{C}_e^\perp|}
    \sum_{\vv{v}\,\in\,\mathcal{C}_e^\perp}
    \sum_{\substack{\vv{x}\,\in\,\F_2^{q-1} \\ \wt(\vv{x})=4}}
    (-1)^{(\vv{u}+\vv{v})\cdot\vv{x}} \\
  &= \frac{1}{|\mathcal{C}_e^\perp|}
    \sum_{\vv{v}\,\in\,\mathcal{C}_e^\perp}
    K_4\!\bigl(\wt(\vv{u}+\vv{v});\,N\bigr),
\end{split}
  \label{eq:lam_kraw}
\end{equation}
where $N=q-1$ and $K_4(w;\,N)$ denotes the Krawtchouk polynomial of
degree $4$, defined by
$K_k(w;\,N)=\sum_{j=0}^{k}(-1)^j\binom{w}{j}\binom{N-w}{k-j}$, and
the last equality follows from the standard identity
$\sum_{\wt(\vv{x})=k}(-1)^{\vv{w}\cdot\vv{x}}=K_k(\wt(\vv{w});\,N)$.

It is shown in Appendix~A that
$\mathcal{C}_e^\perp = \mathcal{RM}^*(1,n)$, the punctured first-order
Reed--Muller code of length $q-1$ and dimension $n+1$, so
$|\mathcal{C}_e^\perp|=2^{n+1}=2q$. The $2q$ codewords of
$\mathcal{RM}^*(1,n)$ are precisely the punctured truth tables of the
affine functions $\vv{a}\cdot\vv{x}+c$, parameterised by
$(\vv{a},c)\in\F_2^n\times\{0,1\}$; we denote the corresponding
codeword $\vv{v}_{\vv{a},c}$. Equation~\eqref{eq:lam_kraw} becomes
\begin{equation}
  \lambda_{\vv{u}}
  = \frac{1}{2q}
    \sum_{\vv{a}\,\in\,\F_2^n}\sum_{c\,\in\,\{0,1\}}
    K_4\!\bigl(\wt(\vv{u}+\vv{v}_{\vv{a},c});\,N\bigr).
  \label{eq:lam_affine}
\end{equation}

Since $\vv{u}$ is the punctured truth table of $f_{\vv{u}}$ and
$\vv{v}_{\vv{a},c}$ is the punctured truth table of
$\vv{a}\cdot\vv{x}+c$, the Hamming weight $\wt(\vv{u}+\vv{v}_{\vv{a},c})$
counts the number of positions (over $\F_2^n\setminus\{\vv{0}\}$) where
$f_{\vv{u}}$ and $\vv{a}\cdot\vv{x}+c$ disagree. Let
$D$ denote the number of disagreements over the full domain $\F_2^n$
and $A=q-D$ the number of agreements. Then
\begin{equation*}
  A - D
  = \sum_{\vv{x}\,\in\,\F_2^n}(-1)^{f_{\vv{u}}(\vv{x})+\vv{a}\cdot\vv{x}+c}
  = (-1)^c\,W_{f_{\vv{u}}}(\vv{a}),
\end{equation*}
which gives $D = 2^{n-1} - \tfrac{(-1)^c}{2}W_{f_{\vv{u}}}(\vv{a})$.
Since $f_{\vv{u}}(\vv{0})=0$, the position $\vv{x}=\vv{0}$ contributes
an \emph{agreement} when $c=0$ and a \emph{disagreement} when $c=1$.
Removing this position to pass to the punctured count $D'$:
\begin{align*}
  \wt(\vv{u}+\vv{v}_{\vv{a},0})
    &= 2^{n-1} - \tfrac{1}{2}W_{f_{\vv{u}}}(\vv{a}), \\
  \wt(\vv{u}+\vv{v}_{\vv{a},1})
    &= 2^{n-1} + \tfrac{1}{2}W_{f_{\vv{u}}}(\vv{a}) - 1.
\end{align*}
Note that the second expression equals $N - \wt(\vv{u}+\vv{v}_{\vv{a},0})$.
Applying the Krawtchouk symmetry property $K_4(w;\,N)=K_4(N-w;\,N)$
(since $(-1)^4=1$; see~\cite{macwilliams1977}), the two terms in the
inner sum of~\eqref{eq:lam_affine} are equal, giving
\begin{equation}
  \lambda_{\vv{u}}
  = \frac{1}{q}\sum_{\vv{a}\,\in\,\F_2^n}
    K_4\!\Bigl(\tfrac{q}{2}-\tfrac{W_{f_{\vv{u}}}(\vv{a})}{2};\,q-1\Bigr).
  \label{eq:lam_walsh}
\end{equation}
(Also note that the expressions for $\wt(\vv{u}+\vv{v}_{\vv{a},0})$
and $\wt(\vv{u}+\vv{v}_{\vv{a},1})$ imply that $W_{f_{\vv{u}}}(\vv{a})$
must be even for all $\vv{a}\in\F_2^n$, a fact that will be used in
the proof of Proposition~\ref{prop:lowerbound}.)
Setting $w=\tfrac{q}{2}-\tfrac{x}{2}$ and $N=q-1$ so that $X\triangleq N-2w=x-1$,
it is shown in Appendix~B that
\begin{equation*}
\begin{split}
  24\,K_4\!\Bigl(\tfrac{q}{2}-\tfrac{x}{2};\,q-1\Bigr)
  &= x^4 - 4x^3 + (20-6q)x^2 \\
  &\quad + (12q-32)x + (3q^2-18q+24).
\end{split}
\end{equation*}
Substituting $x=W_{f_{\vv{u}}}(\vv{a})$ and summing
over $\vv{a}\in\F_2^n$, we apply the identities
\begin{equation*}
  \sum_{\vv{a}\,\in\,\F_2^n} W_{f_{\vv{u}}}(\vv{a}) = q,
  \qquad
  \sum_{\vv{a}\,\in\,\F_2^n} W_{f_{\vv{u}}}(\vv{a})^2 = q^2,
\end{equation*}
where the first follows from $f_{\vv{u}}(\vv{0})=0$ and the second is
Parseval's identity for Walsh transforms~\cite{macwilliams1977}.
Collecting constant terms:
\begin{align*}
  24\lambda_{\vv{u}}
  &= \frac{1}{q}\sum_{\vv{a}}\bigl(W_{f_{\vv{u}}}(\vv{a})^4
     - 4\,W_{f_{\vv{u}}}(\vv{a})^3\bigr) \\
  &\quad + (20-6q)\,q + (12q-32) + (3q^2-18q+24).
\end{align*}
Simplifying the constant part:
$(20-6q)q + 12q - 32 + 3q^2 - 18q + 24
= -3q^2 + 14q - 8$,
and substituting $q = 2^n$ yields~\eqref{eq:eigenvalue_formula}.
\end{IEEEproof}

Thus the problem of finding the minimum eigenvalue of $A$ reduces
to minimizing the objective
\begin{equation}
  L \triangleq \sum_{\vv{a}\,\in\,\F_2^n}
  \Bigl(W_{f_{\vv{u}}}(\vv{a})^4 - 4\,W_{f_{\vv{u}}}(\vv{a})^3\Bigr).
  \label{eq:obj_L}
\end{equation}

\begin{proposition}[Lower Bound on $L$ for Even $n$]
\label{prop:lowerbound}
For even $n$, the minimum value of $L$ is
$L_{\min} = 2^{3n} - 4\cdot 2^{2n}$,
and $L$ is minimized when $f_{\vv{u}}:\F_2^n\to\{0,1\}$ with
$f_{\vv{u}}(\vv{0})=0$ is a bent function.
\end{proposition}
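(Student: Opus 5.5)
The plan is to compare $L$ with its value at a bent function by writing $L$ as a sum of pointwise terms in the Walsh values and adding multiples of the two moment constraints used in the proof of Proposition~\ref{prop:eigenvalue_formula}. Set $q=2^n$ and $s=2^{n/2}$, which is an integer because $n$ is even.

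\emph{Attainment.} For a bent $f_{\vv{u}}$ with $f_{\vv{u}}(\vv{0})=0$ we have $W_{f_{\vv{u}}}(\vv{a})=\pm s$ for every $\vv{a}$, so $W^4=q^2$ and $W^3=q\,W$ at every point. Using $\sum_{\vv{a}}W_{f_{\vv{u}}}(\vv{a})=q$, this gives
\[
L=q\cdot q^2-4q\cdot q=q^3-4q^2=2^{3n}-4\cdot 2^{2n}.
\]

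\emph{Lower bound.} The two constraints from the proof of Proposition~\ref{prop:eigenvalue_formula} say that
\[
\textstyle\sum_{\vv{a}}\bigl(W_{f_{\vv{u}}}(\vv{a})-1\bigr)=0,\qquad \sum_{\vv{a}}\bigl(W_{f_{\vv{u}}}(\vv{a})^2-q\bigr)=0,
\]
so any combination $\alpha(x-1)+\beta(x^2-q)$ may be subtracted pointwise without changing the sum. A direct expansion gives the identity
\[
x^4-4x^3-q x^2+4q x=x(x-4)(x^2-q).
\]
Summing it over $\vv{a}$ and inserting the two moments yields
\[
L-(q^3-4q^2)=\sum_{\vv{a}\in\F_2^n}W_{f_{\vv{u}}}(\vv{a})\bigl(W_{f_{\vv{u}}}(\vv{a})-4\bigr)\bigl(W_{f_{\vv{u}}}(\vv{a})^2-q\bigr).
\]
More generally, adding $\mu(x^2-q)$ shows that $L-L_{\min}=\sum_{\vv{a}}(W^2-q)(W^2-4W+\mu)$ for any constant $\mu$. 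The task is then to show that this sum is nonnegative, with equality exactly when $W^2\equiv q$, i.e.\ when $f_{\vv{u}}$ is bent.

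\emph{Arithmetic input.} The constraints above are not enough on their own, since the quartic $(x^2-q)(x^2-4x+\mu)$ cannot be nonnegative on all of $\mathbb{R}$ while vanishing simply at $\pm s$. I would therefore bring in the arithmetic of the Walsh spectrum:
\begin{itemize}
\item Each $W(\vv{a})=q-2\,\wt(f_{\vv{u}}+\vv{a}\cdot\vv{x})$ is even, as already noted after~\eqref{eq:lam_walsh}.
\item Every linear function $\vv{a}\cdot\vv{x}$ has even weight when $n\ge2$, so all $W(\vv{a})$ lie in a single residue class modulo $4$.
\item Each value satisfies $|W(\vv{a})|\le q$.
\end{itemize}
The first thing I would try is to regroup the sum as $\sum(W^2-q)^2$ plus a cubic correction $-4\sum W(W^2-q)$. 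I would then bound the correction using Parseval together with $\sum W=q$: any deviation of $W^2$ from $q$ is at least of order $s$, because of the parity and mod-$4$ restrictions, so the quadratic term $\sum(W^2-q)^2$ should dominate the cubic one. Equality would then force $\sum(W^2-q)^2=0$, which is exactly the bent condition.

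\emph{Main obstacle.} The hard step is controlling the negative contributions. These come from points with $W\in(-s,0)$ or $W\in(4,s)$; in particular $W=0$ or $W=s\pm2$ make individual terms negative, so the argument has to be genuinely global rather than termwise. Should this fail for small even $n$, I would verify those cases ($n=2,4$) separately by exhaustive enumeration of the admissible Walsh spectra.
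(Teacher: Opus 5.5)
\textbf{Gap: the lower bound is never actually proved.} Your identity $L-L_{\min}=\sum_{\vv{a}}(W^2-q)(W^2-4W+\mu)$ is correct, and it already contains the paper's proof. The missing step is the choice $\mu=4-q$. This gives the paper's witness polynomial $H(x)=(x^2-s^2)\bigl((x-2)^2-s^2\bigr)$, with roots $-s,\,2-s,\,s,\,s+2$.

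\begin{itemize}
\item $H$ is negative only on the open intervals $(-s,2-s)$ and $(s,s+2)$.
\item Each of these intervals contains exactly one integer, $1-s$ or $s+1$, and both are odd because $s$ is even.
\item Every Walsh value is even, so $H\bigl(W_{f_{\vv{u}}}(\vv{a})\bigr)\ge 0$ for each $\vv{a}$.
\item Summing over $\vv{a}$ gives $L\ge L_{\min}$ immediately.
\end{itemize}

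You never make this choice, and you instead assert that the argument must be \emph{genuinely global rather than termwise}. That assertion is wrong. The negative terms you worry about are an artifact of taking $\mu=0$ or $\mu=-q$, not a feature of the problem. Neither the mod-$4$ observation nor $|W|\le q$ is needed.

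What you propose in place of the bound does not work. The claim that $\sum(W^2-q)^2$ should dominate $4\sum W(W^2-q)$ because deviations are of order $s$ is only a heuristic. It already fails pointwise at $W=s+2$:
\[
(W^2-q)^2-4W(W^2-q)=-16(s+1)<0.
\]
Enumerating $n=2,4$ would not cover general even $n$ in any case.

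\textbf{The equality characterization you aim for is false.} Your route is designed to show that $L=L_{\min}$ forces $\sum(W^2-q)^2=0$, i.e.\ that only bent functions attain the bound. That cannot be proved, so the argument cannot be completed in the form you describe. Equality holds if and only if every Walsh value lies in $\{-s,\,2-s,\,s,\,s+2\}$, and this set admits non-bent functions.

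\begin{itemize}
\item Take $g$ bent with $g(\vv{0})=1$, and let $h$ be $g$ with its value at $\vv{0}$ flipped.
\item Then $h(\vv{0})=0$ and $W_h(\vv{a})=W_g(\vv{a})+2\in\{2-s,\,s+2\}$, so $h$ attains $L_{\min}$ but is not bent.
\item For $n=2$ this is $h=x_1+x_2$, with spectrum $(0,0,0,4)$ and $L=0=L_{\min}$.
\item For $n=4$ the spectrum is ten values $-2$ and six values $6$. Then $L=3072=L_{\min}$, yet $\sum(W^2-q)^2=3840\neq 0$.
\end{itemize}

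This agrees with the paper's remark that minimizers other than bent functions may exist. The proposition only requires that bent functions attain the bound, and your attainment computation shows this correctly. What is missing is a valid proof of $L\ge L_{\min}$, and taking $\mu=4-q$ supplies it.
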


\begin{IEEEproof}
Write $W_{\vv{a}} \triangleq W_{f_{\vv{u}}}(\vv{a})$ for brevity,
and let $V \triangleq 2^{n/2}$, which is an even integer for $n$
even. Consider the witness polynomial
\begin{equation*}
  H(x) \triangleq (x^2 - V^2)\bigl((x-2)^2 - V^2\bigr),
\end{equation*}
whose roots are $\pm V$ and $2\pm V$. The polynomial $H$ is negative
only on the open intervals $(-V,\,-V+2)$ and $(V,\,V+2)$. Since
$W_{\vv{a}}$ is an even integer for every $\vv{a}$ (as shown in the
derivation of Proposition~\ref{prop:eigenvalue_formula}), and since
$(-V,\,-V+2)$ and $(V,\,V+2)$ each contain no even integers
(their only integer is odd), we conclude $H(W_{\vv{a}})\geq 0$ for
all $\vv{a}\in\F_2^n$.

Expanding $H(x)$:
\begin{equation*}
  H(x) = (x^4 - 4x^3) - (2V^2-4)x^2 + 4V^2 x + (V^4 - 4V^2) \geq 0,
\end{equation*}
which gives
\begin{equation*}
\begin{split}
  \sum_{\vv{a}} W_{\vv{a}}^4 - 4W_{\vv{a}}^3
  &\geq (2V^2-4)\sum_{\vv{a}}W_{\vv{a}}^2
    - 4V^2\sum_{\vv{a}}W_{\vv{a}}\\
  &\quad - 2^n(V^4 - 4V^2).
\end{split}
\end{equation*}
Applying the identities $\sum_{\vv{a}}W_{\vv{a}} = 2^n = V^2$ and
$\sum_{\vv{a}}W_{\vv{a}}^2 = 2^{2n} = V^4$ (Parseval's
identity~\cite{macwilliams1977}, using $f_{\vv{u}}(\vv{0})=0$):
\begin{align*}
  L &\geq (2V^2-4)V^4 - 4V^2 \cdot V^2 - V^2(V^4 - 4V^2) \\
    &= 2V^6 - 4V^4 - 4V^4 - V^6 + 4V^4 \\
    &= V^6 - 4V^4 = 2^{3n} - 4\cdot 2^{2n}.
\end{align*}
Equality holds iff $H(W_{\vv{a}})=0$ for all $\vv{a}$, i.e.,
$W_{\vv{a}}\in\{-V,\,-V+2,\,V,\,V+2\}$ for all $\vv{a}$.
For even $n$, bent functions $f_{\vv{u}}:\F_2^n\to\{0,1\}$ satisfy
$|W_{\vv{a}}|=V$ for all $\vv{a}$~\cite{macwilliams1977}, so
$W_{\vv{a}}\in\{-V,+V\}\subset\{-V,\,-V+2,\,V,\,V+2\}$, and
hence achieve the bound. Thus $L_{\min}=2^{3n}-4\cdot 2^{2n}$.
\end{IEEEproof}

\begin{proposition}[Minimum Eigenvalue and Optimal Eigenvector]
\label{prop:mineig}
For even $n$, the minimum eigenvalue of $A$ is
\begin{equation}
  \lambda_{\min}
  = -\frac{(2^n-1)(2^n-4)}{12},
  \label{eq:lambda_min}
\end{equation}
and an eigenvector in $\mathcal{Q}$ corresponding to $\lambda_{\min}$
is given by
$\vv{e}_{\vv{u}} = \bigl[(-1)^{\vv{u}\cdot\vv{v}_1},\,
(-1)^{\vv{u}\cdot\vv{v}_2},\,\ldots,\,
(-1)^{\vv{u}\cdot\vv{v}_{|\mathcal{C}_e|}}\bigr]^\top$,
where $\vv{u}$ is the punctured truth table of a bent function
$f_{\vv{u}}:\F_2^n\to\{0,1\}$ with $f_{\vv{u}}(\vv{0})=0$.
\end{proposition}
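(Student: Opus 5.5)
The plan is to combine Proposition~\ref{prop:eigenvalue_formula} with Proposition~\ref{prop:lowerbound}.

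First, I need every $\vv{u}\in\F_2^{q-1}$ to be covered. Every such $\vv{u}$ is the punctured truth table of exactly one function $f_{\vv{u}}:\F_2^n\to\{0,1\}$ with $f_{\vv{u}}(\vv{0})=0$: set $f_{\vv{u}}(\vv{0})=0$ and read off the remaining values from $\vv{u}$. Hence the minimum $\lambda_{\min}=\min_{\vv{u}}\lambda_{\vv{u}}$ from Corollary~\ref{cor:eigvecs} ranges exactly over all such functions. Since~\eqref{eq:eigenvalue_formula} is $L/(24\cdot 2^n)$ plus a constant independent of $\vv{u}$, minimizing $\lambda_{\vv{u}}$ is the same as minimizing $L$ in~\eqref{eq:obj_L}.

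Second, for even $n$, Proposition~\ref{prop:lowerbound} gives $L_{\min}=2^{3n}-4\cdot 2^{2n}$. It is attained by bent functions, and bent functions with $f(\vv{0})=0$ exist: take the inner-product function $x_1x_2+x_3x_4+\cdots+x_{n-1}x_n$, which vanishes at $\vv{0}$. (If one prefers to start from an arbitrary bent $f$, adding the constant $f(\vv{0})$ keeps it bent.)

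Third, I substitute into~\eqref{eq:eigenvalue_formula}. With $q=2^n$,
\begin{equation*}
  24\lambda_{\min}=\frac{q^3-4q^2}{q}-3q^2+14q-8=-2q^2+10q-8=-2(q-1)(q-4).
\end{equation*}
This gives $\lambda_{\min}=-(q-1)(q-4)/12$, which is~\eqref{eq:lambda_min}.

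Finally, the claim about the eigenvector follows from Corollary~\ref{cor:eigvecs}: for $\vv{u}$ the punctured truth table of a bent $f_{\vv{u}}$ with $f_{\vv{u}}(\vv{0})=0$, the vector $\vv{e}_{\vv{u}}$ has eigenvalue $\lambda_{\vv{u}}=\lambda_{\min}$. By Theorem~\ref{thm:eigvec_opt} it is then also an optimal cut vector.

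The only step needing care is the surjectivity of the map from functions to $\vv{u}$, which ensures the minimum of $L$ over bent-admissible functions really equals the minimum eigenvalue. Beyond that the argument is a direct substitution.

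A sanity check supports the formula. For $n=2$ it gives $\lambda_{\min}=0$, consistent with the $[3,1,3]$ code having $\mathcal{C}_e=\{\vv{0}\}$ and no edges. For $n=4$ it gives $-15$, and the trivial eigenvalue $\lambda_{\vv{0}}=|\mathcal{S}_4|$ is larger, as expected.
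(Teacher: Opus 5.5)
Your proposal is correct and follows essentially the same route as the paper. Both invoke Proposition~\ref{prop:lowerbound} to obtain $L_{\min}$, substitute it into~\eqref{eq:eigenvalue_formula} to get $24\lambda_{\min}=-2(2^n-1)(2^n-4)$, and read off the eigenvector from Corollary~\ref{cor:eigvecs}. Your two extra checks make explicit what the paper leaves implicit: the bijection between $\vv{u}$ and functions with $f_{\vv{u}}(\vv{0})=0$, and the existence of such bent functions, e.g.\ $x_1x_2+\cdots+x_{n-1}x_n$.
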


\begin{IEEEproof}
By Proposition~\ref{prop:lowerbound}, $L$ is minimized when
$\vv{u}$ is the punctured truth table of a bent function
$f_{\vv{u}}$ with $f_{\vv{u}}(\vv{0})=0$, achieving
$L_{\min}=2^{3n}-4\cdot 2^{2n}$. Substituting into
\eqref{eq:eigenvalue_formula}:
\begin{align*}
  24\,\lambda_{\min}
  &= \frac{1}{2^n}(2^{3n} - 4\cdot 2^{2n})
     - 3\cdot 2^{2n} + 14\cdot 2^n - 8 \\
  &= 2^{2n} - 4\cdot 2^n - 3\cdot 2^{2n} + 14\cdot 2^n - 8 \\
  &= -2\cdot 2^{2n} + 10\cdot 2^n - 8 \\
  &= -2(2^{2n} - 5\cdot 2^n + 4)
   = -2(2^n-1)(2^n-4),
\end{align*}
which gives $\lambda_{\min} = -(2^n-1)(2^n-4)/12$.
One can verify that $(2^n-1)(2^n-4)$ is always divisible by $12$
for even $n$, so $\lambda_{\min}$ is an integer.
It follows from Corollary~\ref{cor:eigvecs} that the character
vector $\vv{e}_{\vv{u}}\in\mathcal{Q}$ corresponding to such
$\vv{u}$ is an eigenvector of $A$ with eigenvalue $\lambda_{\min}$.
\end{IEEEproof}

\begin{theorem}[Optimal Parity Assignment for Even $n$]
\label{thm:opt_parity_even}
For even $n$, the cut vector
$\vv{z} = \bigl[(-1)^{\vv{u}\cdot\vv{v}_1},\,
(-1)^{\vv{u}\cdot\vv{v}_2},\,\ldots,\,
(-1)^{\vv{u}\cdot\vv{v}_{|\mathcal{C}_e|}}\bigr]^\top$,
where $\vv{u}$ is the punctured truth table of a bent function
$f_{\vv{u}}:\F_2^n\to\{0,1\}$ with $f_{\vv{u}}(\vv{0})=0$, defines
a parity assignment for $\mathcal{C}_e$ that satisfies
constraint~\eqref{eq:opt_pairs}.
\end{theorem}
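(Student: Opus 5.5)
The plan is to chain the results already established: Proposition~\ref{prop:mineig} gives the eigenvector, Theorem~\ref{thm:eigvec_opt} gives optimality of the cut, and Theorem~\ref{thm:maxcut} translates optimality back to parity pairs.

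\emph{Step 1: the cut vector attains the minimum eigenvalue.} Fix a bent function $f_{\vv{u}}$ on $\F_2^n$ (even $n$) with $f_{\vv{u}}(\vv{0})=0$. Such functions exist: start from any bent function, e.g.\ $x_1x_2+\cdots+x_{n-1}x_n$, and add a constant if needed, which preserves bentness. Let $\vv{u}$ be its punctured truth table. By Proposition~\ref{prop:lowerbound}, $L$ attains $L_{\min}=2^{3n}-4\cdot 2^{2n}$. I also need that $L_{\min}$ is a lower bound over \emph{all} $\vv{u}\in\F_2^{2^n-1}$. This holds because every such $\vv{u}$ is the punctured truth table of a unique $f_{\vv{u}}$ with $f_{\vv{u}}(\vv{0})=0$, so the domain of Proposition~\ref{prop:eigenvalue_formula} covers all of $\mathcal{Q}$. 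Hence, by \eqref{eq:eigenvalue_formula} and Proposition~\ref{prop:mineig}, $\lambda_{\vv{u}}=\lambda_{\min}$, and by Corollary~\ref{cor:eigvecs} the vector $\vv{z}=\vv{e}_{\vv{u}}$ is an eigenvector for $\lambda_{\min}$.

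\emph{Step 2: $\vv{z}$ minimizes the quadratic form over $\pm1$ vectors.} Since $\vv{z}\in\{+1,-1\}^{|\mathcal{C}_e|}$, it is a legitimate cut vector. It corresponds to assigning $\vv{p}$ to codewords with $\vv{u}\cdot\vv{v}_i=0$ and $\bar{\vv{p}}$ otherwise, for the parity set $\{\vv{p},\bar{\vv{p}}\}$ allotted to $\mathcal{C}_e$ under Theorem~\ref{thm:bipartite_parity}. The Rayleigh bound gives $\vv{y}^\top A\vv{y}\ge\lambda_{\min}|\mathcal{C}_e|$ for every $\pm1$ vector $\vv{y}$. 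Our $\vv{z}$ achieves this bound with equality, so it minimizes $\vv{z}^\top A\vv{z}$. This is exactly Theorem~\ref{thm:eigvec_opt}.

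\emph{Step 3: translate back to the objective.} By Theorem~\ref{thm:maxcut}, $\vv{z}$ minimizes the number of same-parity edges in $\mathcal{E}_4$ within $\mathcal{C}_e$. Each such edge contributes exactly $12$ distance-$2$ NHCW pairs in category~(iii). Categories~(i) and~(ii) are invariant across all assignments satisfying Theorems~\ref{thm:valid_dmin2} and~\ref{thm:bipartite_parity}. Moreover, no distance-$4$ edge crosses the bipartition, so the $\mathcal{C}_e$ part of the count can be minimized independently of $\mathcal{C}_o$. Together these show that the assignment satisfies constraint~\eqref{eq:opt_pairs} restricted to $\mathcal{C}_e$, which is the claimed statement.

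The main obstacle I expect is the global optimality in Step~1. Proposition~\ref{prop:lowerbound} must be read as a bound valid for every $\vv{u}$, so that the bent choice truly gives $\min_{\vv{u}}\lambda_{\vv{u}}$ rather than merely one eigenvalue. This requires checking that the evenness of $W_{f_{\vv{u}}}(\vv{a})$ used there holds for arbitrary $\vv{u}$, which it does since each Walsh coefficient is a sum of $2^n$ terms $\pm1$. A second subtlety is the reduction that makes $\vv{e}_{\vv{u}}$ a genuine eigenvector: two $\vv{u}$ differing by an element of $\mathcal{C}_e^\perp$ give the same character, but this only affects which vector represents the assignment, not optimality.
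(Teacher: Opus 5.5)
Your proposal is correct and matches the paper's proof, which is the same chain: Proposition~\ref{prop:mineig}, then Theorem~\ref{thm:eigvec_opt}, then Theorem~\ref{thm:maxcut}, combined with the category~(i)--(iii) accounting given before Theorem~\ref{thm:maxcut}. Your added checks make explicit points the paper leaves implicit: every $\vv{u}\in\F_2^{2^n-1}$ is the punctured truth table of a unique $f_{\vv{u}}$ with $f_{\vv{u}}(\vv{0})=0$, so the bound of Proposition~\ref{prop:lowerbound} holds over all of $\mathcal{Q}$; Walsh coefficients are always even; and bent functions with $f_{\vv{u}}(\vv{0})=0$ exist.
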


\begin{IEEEproof}
Follows directly from Theorems~\ref{thm:maxcut}
and~\ref{thm:eigvec_opt} and Proposition~\ref{prop:mineig}.
\end{IEEEproof}

\begin{remark}
\label{rem:other_minimizers}
Bent functions with $f_{\vv{u}}(\vv{0})=0$ constitute only one class
of minimizers for $L$. As seen from the proof of
Proposition~\ref{prop:lowerbound}, any $f_{\vv{u}}$ with
$f_{\vv{u}}(\vv{0})=0$ whose Walsh spectrum lies entirely in
$\{-V,\,-V+2,\,V,\,V+2\}$ achieves $L_{\min}$, of which the
condition $|W_{\vv{a}}|=V$ for all $\vv{a}$ (the bent function
condition) is a strict special case. Consequently, there may exist
additional eigenvectors in $\mathcal{Q}$ corresponding to
$\lambda_{\min}$ beyond those parameterized by bent functions,
and thus potentially further parity assignments satisfying
constraint~\eqref{eq:opt_pairs} that are not fully characterized
here.
\end{remark}


\subsection{Optimal Parity Assignment for $\mathcal{C}_o$ and
Complete SEFCC Construction}
\label{subsec:Co_construction}

Having established optimal parity assignments for $\mathcal{C}_e$,
we now turn to $\mathcal{C}_o$. We show that the distance-$4$ graph
$G_4' = (\mathcal{C}_o, \mathcal{E}_4')$ on $\mathcal{C}_o$ is
isomorphic to $G_4$, which allows us to transfer the spectral
characterization of Section~\ref{subsec:eigenvalues} directly to
$\mathcal{C}_o$.

We first note that Theorem~\ref{thm:maxcut} extends verbatim to
$\mathcal{C}_o$:

\begin{theorem}[Max-Cut Formulation for $\mathcal{C}_o$]
\label{thm:maxcut_Co}
Let $\vv{w}_1,\ldots,\vv{w}_{|\mathcal{C}_o|}$ be the HCWs in
$\mathcal{C}_o$, where $|\mathcal{C}_o|=2^{2^n-n-2}$. Let
$G_4'=(\mathcal{C}_o,\mathcal{E}_4')$ be the graph with
$\{\vv{w}_i,\vv{w}_j\}\in\mathcal{E}_4'$ iff
$\dH(\vv{w}_i,\vv{w}_j)=4$, and let $A'$ be the adjacency matrix
of $G_4'$. Given that parity assignments to HCWs in $\mathcal{C}_o$ are drawn from $\{\vv{p'},\bar{\vv{p'}}\}\in\{P_e,P_o\}$, define the cut vector $\vv{z}\in\{+1,-1\}^{|\mathcal{C}_o|}$ by $z_i=+1$ if $\pC(\vv{w}_i)=\vv{p'}$ and $z_i=-1$ if $\pC(\vv{w}_i)=\bar{\vv{p'}}$. Then the number of same-parity pairs in $\mathcal{E}_4'$ is minimized if and only if $\vv{z}$ minimizes $\vv{z}^\top A'\vv{z}$.
\end{theorem}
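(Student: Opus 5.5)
The plan is to repeat the argument of Theorem~\ref{thm:maxcut} verbatim, replacing $\mathcal{C}_e$, $G_4$, $A$ by $\mathcal{C}_o$, $G_4'$, $A'$. The first step is to check that the set-up makes sense for $\mathcal{C}_o$. We have $|\mathcal{C}_o|=|\mathcal{C}_e|=2^{2^n-n-2}$ because the Hamming code contains odd-weight codewords; for instance, the weight-$3$ codewords used in Proposition~\ref{prop:connected} exist for $n\geq 2$. Hence $\mathcal{C}_e$ is an index-$2$ subgroup and $\mathcal{C}_o$ is its coset. Next, all distance-$4$ HCW pairs lie within a single partite set, as shown before Theorem~\ref{thm:maxcut}. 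So $\mathcal{E}_4'$ is exactly the set of distance-$4$ pairs among odd-weight codewords, and by Theorem~\ref{thm:bipartite_parity} every codeword of $\mathcal{C}_o$ receives a parity from one fixed complementary set $\{\vv{p}',\bar{\vv{p}}'\}$. Therefore the cut vector $\vv{z}\in\{\pm1\}^{|\mathcal{C}_o|}$ is well defined, and it is in bijection with the admissible parity assignments on $\mathcal{C}_o$.

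The core computation uses that $A'$ is symmetric with zero diagonal and $A'_{ij}=1$ exactly when $\{\vv{w}_i,\vv{w}_j\}\in\mathcal{E}_4'$. This gives
\begin{equation*}
  \vv{z}^\top A'\vv{z}=2\sum_{\{\vv{w}_i,\vv{w}_j\}\in\mathcal{E}_4'} z_iz_j = 2|\mathcal{E}_s'|-2|\mathcal{E}_c'|,
\end{equation*}
where $\mathcal{E}_s'$ denotes the same-parity edges ($z_iz_j=+1$) and $\mathcal{E}_c'$ the cut edges ($z_iz_j=-1$). Since $|\mathcal{E}_s'|+|\mathcal{E}_c'|=|\mathcal{E}_4'|$, which does not depend on the assignment, we get
\begin{equation*}
  |\mathcal{E}_s'|=\frac{|\mathcal{E}_4'|}{2}+\frac{\vv{z}^\top A'\vv{z}}{4}.
\end{equation*}
This is a strictly increasing affine function of $\vv{z}^\top A'\vv{z}$, so $|\mathcal{E}_s'|$ is minimized exactly when $\vv{z}^\top A'\vv{z}$ is minimized over $\{\pm1\}^{|\mathcal{C}_o|}$.

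The only point needing care is the link back to constraint~\eqref{eq:opt_pairs}. Each same-parity distance-$4$ pair in $\mathcal{C}_o$ contributes the same fixed number, $12$, of distance-$2$ NHCW pairs, as in category~(iii), and categories~(i) and~(ii) are invariant. Hence minimizing $|\mathcal{E}_s'|$ is the right objective. No edge of $G_4$ or $G_4'$ crosses the bipartition, so this minimization decouples from the choice made on $\mathcal{C}_e$. I expect no real obstacle: the argument never uses the group structure of $\mathcal{C}_e$, which $\mathcal{C}_o$ lacks since it is only a coset, so it transfers unchanged.
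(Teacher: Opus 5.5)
Your proposal is correct and takes essentially the same approach as the paper. The paper's proof just says it is identical to that of Theorem~\ref{thm:maxcut}: expand $\vv{z}^\top A'\vv{z}=2|\mathcal{E}_s'|-2|\mathcal{E}_c'|$ and use that $|\mathcal{E}_s'|+|\mathcal{E}_c'|=|\mathcal{E}_4'|$ does not depend on the assignment. Your version writes $|\mathcal{E}_s'|$ directly as an increasing affine function of $\vv{z}^\top A'\vv{z}$, rather than $|\mathcal{E}_c'|$ as a decreasing one, and your extra well-definedness checks are sound but not strictly needed.
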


\begin{IEEEproof}
Identical to the proof of Theorem~\ref{thm:maxcut}.
\end{IEEEproof}

\begin{lemma}[Isomorphism of $G_4$ and $G_4'$]
\label{lem:isomorphism}
The graphs $G_4$ and $G_4'$ are isomorphic. Specifically, for any
fixed $\vv{w}'\in\mathcal{C}_o$, the map
$\phi:\mathcal{C}_e\to\mathcal{C}_o$ defined by
$\phi(\vv{v})=\vv{v}\oplus\vv{w}'$ is a graph isomorphism.
\end{lemma}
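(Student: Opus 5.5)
The plan is to verify directly that $\phi(\vv{v})=\vv{v}\oplus\vv{w}'$ is a well-defined bijection from $\mathcal{C}_e$ onto $\mathcal{C}_o$ that preserves Hamming distance, and hence preserves the distance-$4$ adjacency relation in both directions.

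\emph{Well-definedness.} Take $\vv{v}\in\mathcal{C}_e$. By linearity of the Hamming code, $\vv{v}\oplus\vv{w}'\in\CH$. Its weight parity follows from the identity
\[
\wt(\vv{x}\oplus\vv{y})=\wt(\vv{x})+\wt(\vv{y})-2|\mathrm{supp}(\vv{x})\cap\mathrm{supp}(\vv{y})|,
\]
which gives $\wt(\vv{x}\oplus\vv{y})\equiv\wt(\vv{x})+\wt(\vv{y})\pmod 2$. Since $\wt(\vv{v})$ is even and $\wt(\vv{w}')$ is odd, $\wt(\vv{v}\oplus\vv{w}')$ is odd. Hence $\phi(\vv{v})\in\mathcal{C}_o$. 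This is the same parity argument used in Proposition~\ref{prop:bipartite} and in the observation that distance-$4$ pairs stay within a partite set.

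\emph{Bijectivity.} The map $\phi$ is injective, since $\vv{v}\oplus\vv{w}'=\vv{v}'\oplus\vv{w}'$ implies $\vv{v}=\vv{v}'$. It is surjective by the same parity argument: for $\vv{w}\in\mathcal{C}_o$, the vector $\vv{w}\oplus\vv{w}'$ is a codeword with even weight, so it lies in $\mathcal{C}_e$, and $\phi(\vv{w}\oplus\vv{w}')=\vv{w}$. Thus $\phi$ is a bijection, and its inverse is $\phi^{-1}(\vv{w})=\vv{w}\oplus\vv{w}'$. As a by-product this confirms $|\mathcal{C}_o|=|\mathcal{C}_e|=2^{2^n-n-2}$. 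Note that $\mathcal{C}_o$ is nonempty, because the code contains weight-$3$ codewords; so a fixed $\vv{w}'$ exists.

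\emph{Edge preservation.} For $\vv{x},\vv{y}\in\mathcal{C}_e$,
\[
\dH(\phi(\vv{x}),\phi(\vv{y}))=\wt(\vv{x}\oplus\vv{w}'\oplus\vv{y}\oplus\vv{w}')=\wt(\vv{x}\oplus\vv{y})=\dH(\vv{x},\vv{y}).
\]
So $\{\vv{x},\vv{y}\}\in\mathcal{E}_4$ if and only if $\{\phi(\vv{x}),\phi(\vv{y})\}\in\mathcal{E}_4'$. A bijection that preserves adjacency in both directions is a graph isomorphism, which completes the argument.

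There is no real obstacle here. The only point needing care is confirming that $\phi$ lands in, and covers all of, $\mathcal{C}_o$, and the weight-parity identity handles that. Equivalently, one can view $\mathcal{C}_o$ as the coset $\vv{w}'\oplus\mathcal{C}_e$ of the index-$2$ subgroup $\mathcal{C}_e$ of $\CH$; translation by $\vv{w}'$ is an isometry of $\F_2^{2^n-1}$ mapping this subgroup onto the coset.
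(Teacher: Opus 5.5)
Your proposal is correct and follows the same route as the paper's proof: well-definedness via weight parity, bijectivity via cancellation and the same parity argument for surjectivity, and adjacency preservation via $\dH(\phi(\vv{x}),\phi(\vv{y}))=\dH(\vv{x},\vv{y})$. You are slightly more careful than the paper in two places: you state explicitly that $\vv{v}\oplus\vv{w}'\in\CH$ by linearity, and you note that $\mathcal{C}_o\neq\emptyset$.
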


\begin{IEEEproof}
We verify the three properties of a graph isomorphism.

\textit{Well-definedness:} For any $\vv{v}\in\mathcal{C}_e$,
$\wt(\vv{v}\oplus\vv{w}')\equiv\wt(\vv{v})+\wt(\vv{w}')\equiv 0+1
\equiv 1\pmod{2}$, so $\phi(\vv{v})\in\mathcal{C}_o$.

\textit{Bijectivity:} The map is injective since $\oplus$ is
cancellative. It is surjective since for any $\vv{x}\in\mathcal{C}_o$,
the vector $\vv{v}=\vv{x}\oplus\vv{w}'$ satisfies
$\wt(\vv{v})\equiv\wt(\vv{x})+\wt(\vv{w}')\equiv 0\pmod{2}$, so
$\vv{v}\in\mathcal{C}_e$, and $\phi(\vv{v})=\vv{x}$.

\textit{Adjacency preservation:} For any
$\vv{v}_1,\vv{v}_2\in\mathcal{C}_e$,
\begin{equation*}
\begin{split}
    \dH(\phi(\vv{v}_1),\phi(\vv{v}_2))
  = \wt((\vv{v}_1\oplus\vv{w}')\oplus(\vv{v}_2\oplus\vv{w}'))
  \\= \wt(\vv{v}_1\oplus\vv{v}_2)
  = \dH(\vv{v}_1,\vv{v}_2),
\end{split}
\end{equation*}
so $\phi$ preserves distances and hence adjacency.
\end{IEEEproof}

\begin{corollary}[Spectral Consequences of Isomorphism]
\label{cor:isomorphism}
Fix an ordering $\vv{w}_1,\ldots,\vv{w}_{|\mathcal{C}_o|}$ of the
HCWs in $\mathcal{C}_o$ and $\vv{v}_1,\ldots,\vv{v}_{|\mathcal{C}_e|}$
of those in $\mathcal{C}_e$, and let $P$ be the permutation matrix
defined by $P_{ik}=1$ if $\vv{w}_i=\phi(\vv{v}_k)$ and $0$
otherwise. Then:
\begin{enumerate}[nosep, topsep=0pt, partopsep=0pt]
\item[a)] $A' = PAP^\top$;
\item[b)] $A$ and $A'$ have the same characteristic polynomial and
hence the same eigenvalues, including $\lambda_{\min}$;
\item[c)] If $\vv{e}$ is an eigenvector of $A$ corresponding to
eigenvalue $\lambda$, then $P\vv{e}$ is an eigenvector of $A'$
corresponding to the same eigenvalue $\lambda$, and vice versa.
\end{enumerate}
\end{corollary}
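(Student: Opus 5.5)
The plan is to prove part a) directly from the definition of $P$ and Lemma~\ref{lem:isomorphism}, and then deduce b) and c) by routine linear algebra. First we note that $P$ really is a permutation matrix. Since $\phi$ is a bijection $\mathcal{C}_e\to\mathcal{C}_o$ (Lemma~\ref{lem:isomorphism}), each row index $i$ has exactly one $k$ with $\vv{w}_i=\phi(\vv{v}_k)$. Likewise each column $k$ has exactly one such $i$. Hence $P$ is a $0/1$ matrix with exactly one $1$ in every row and column, and so $PP^\top=P^\top P=I$. It is convenient to write $\sigma(k)$ for the index with $\vv{w}_{\sigma(k)}=\phi(\vv{v}_k)$. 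Then $P\vv{x}$ is the vector whose $\sigma(k)$-th entry equals $x_k$.

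For a), we compute entrywise:
\[
(PAP^\top)_{ij}=\sum_{k,l}P_{ik}A_{kl}P_{jl}=A_{\sigma^{-1}(i)\,\sigma^{-1}(j)}.
\]
By the definition of $G_4$, this entry is $1$ exactly when $\dH(\vv{v}_{\sigma^{-1}(i)},\vv{v}_{\sigma^{-1}(j)})=4$. The adjacency-preservation part of Lemma~\ref{lem:isomorphism} gives $\dH(\phi(\vv{v}_1),\phi(\vv{v}_2))=\dH(\vv{v}_1,\vv{v}_2)$. So the condition is equivalent to $\dH(\vv{w}_i,\vv{w}_j)=4$, which is exactly the condition $A'_{ij}=1$ in Theorem~\ref{thm:maxcut_Co}. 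This proves $A'=PAP^\top$. The only real care needed in the whole argument is this index bookkeeping, namely getting the direction of $\sigma$ versus $\sigma^{-1}$ right. Everything else is standard.

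For b), since $P^\top=P^{-1}$, $A'$ is similar to $A$. Therefore
\[
\det(tI-A')=\det\bigl(P(tI-A)P^{-1}\bigr)=\det(tI-A),
\]
so the two matrices have the same characteristic polynomial. In particular they have the same eigenvalues with the same multiplicities, and so the same $\lambda_{\min}$ (the value in Proposition~\ref{prop:mineig} for even $n$).

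For c), suppose $A\vv{e}=\lambda\vv{e}$. Then
\[
A'(P\vv{e})=PAP^\top P\vv{e}=PA\vv{e}=\lambda P\vv{e},
\]
and $P\vv{e}\neq\vv{0}$ because $P$ is invertible. Conversely, if $A'\vv{e}'=\lambda\vv{e}'$, then $A=P^\top A'P$ gives $A(P^\top\vv{e}')=\lambda P^\top\vv{e}'$. This establishes the "vice versa" direction.

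We will also remark that $P$ only permutes entries. Consequently $P\vv{e}_{\vv{u}}$ is again $\pm1$-valued, which is what allows Theorem~\ref{thm:eigvec_opt} to be transferred to $\mathcal{C}_o$ via Theorem~\ref{thm:maxcut_Co}.
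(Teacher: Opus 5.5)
Your proposal is correct and takes the same route as the paper. Part a) comes from the definition of $P$ together with the distance-preserving bijection $\phi$ of Lemma~\ref{lem:isomorphism}, and parts b) and c) come from the similarity $A'=PAP^\top$ with $P^\top=P^{-1}$. Your entrywise check $(PAP^\top)_{ij}=A_{\sigma^{-1}(i)\,\sigma^{-1}(j)}$ and the explicit converse via $A=P^\top A'P$ only fill in details that the paper leaves as standard.
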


\begin{IEEEproof}
Part~(a) follows directly from the definition of $P$ and the
isomorphism $\phi$. Parts~(b) and~(c) are standard consequences of
matrix similarity~\cite{macwilliams1977}: since $A'=PAP^\top$ and
$P$ is orthogonal ($P^\top=P^{-1}$), $A$ and $A'$ are similar and
hence share their characteristic polynomial and eigenvalues. If
$A\vv{e}=\lambda\vv{e}$, then
$A'(P\vv{e})=PAP^\top P\vv{e}=PA\vv{e}=\lambda(P\vv{e})$.
\end{IEEEproof}

Using Corollary~\ref{cor:isomorphism}, we extend
Theorem~\ref{thm:eigvec_opt} to $\mathcal{C}_o$.

\begin{theorem}[Optimal Parity Assignment for $\mathcal{C}_o$ via
Eigenvectors]
\label{thm:eigvec_opt_Co}
Let $\mathcal{Q}'\triangleq\{P\vv{e}_{\vv{u}}\mid
\vv{e}_{\vv{u}}\in\mathcal{Q}\}=P\mathcal{Q}$ be the set of orthogonal
eigenvectors of $A'$ obtained by applying $P$ to each element of
$\mathcal{Q}$. Any vector $\vv{e}_{\vv{u}}'\in\mathcal{Q}'$ or its
negative $-\vv{e}_{\vv{u}}'$ corresponding to $\lambda_{\min}$ is a
valid cut vector $\vv{z}\in\{+1,-1\}^{|\mathcal{C}_o|}$ and minimizes
$\vv{z}^\top A'\vv{z}$. Such vectors yield the optimal parity
assignment for $\mathcal{C}_o$, i.e., a parity assignment satisfying
constraint~\eqref{eq:opt_pairs}.
\end{theorem}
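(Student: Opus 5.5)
The argument mirrors the proof of Theorem~\ref{thm:eigvec_opt}, with $A$ replaced by $A'$ and the eigenbasis $\mathcal{Q}$ transported by the permutation $P$ of Corollary~\ref{cor:isomorphism}.

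First, we establish that $\mathcal{Q}'$ is an orthogonal eigenbasis of $A'$ with the same spectrum as $A$. By Corollary~\ref{cor:isomorphism}(c), each $P\vv{e}_{\vv{u}}$ is an eigenvector of $A'$ with eigenvalue $\lambda_{\vv{u}}$. Since $P$ is orthogonal, it preserves inner products, so the vectors in $\mathcal{Q}'$ are mutually orthogonal. There are $|\mathcal{C}_o|$ of them, so they span $\mathbb{R}^{|\mathcal{C}_o|}$. By Corollary~\ref{cor:isomorphism}(b), the minimum eigenvalue of $A'$ equals $\lambda_{\min}$.

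Second, we check that these vectors are valid cut vectors. A permutation matrix only reorders entries, so each $P\vv{e}_{\vv{u}}$ inherits the $\pm1$ entries of $\vv{e}_{\vv{u}}$. Explicitly, its entry at $\vv{w}_i=\phi(\vv{v}_k)$ is $(-1)^{\vv{u}\cdot\vv{v}_k}=(-1)^{\vv{u}\cdot(\vv{w}_i\oplus\vv{w}')}$. The same holds for the negative $-P\vv{e}_{\vv{u}}$. Hence every element of $\mathcal{Q}'$, and its negative, lies in $\{+1,-1\}^{|\mathcal{C}_o|}$.

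Third, we show these vectors minimize the quadratic form. Since $A'$ is real symmetric, the Rayleigh quotient gives, for every $\vv{z}\in\{+1,-1\}^{|\mathcal{C}_o|}$,
\begin{equation*}
\vv{z}^\top A'\vv{z}\geq\lambda_{\min}\|\vv{z}\|^2=\lambda_{\min}|\mathcal{C}_o|.
\end{equation*}
Equality holds exactly when $\vv{z}$ lies in the $\lambda_{\min}$-eigenspace. Any $\vv{e}'_{\vv{u}}\in\mathcal{Q}'$ with eigenvalue $\lambda_{\min}$, or its negative, is $\pm1$-valued by the previous step and lies in that eigenspace, so it attains the bound. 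It therefore minimizes $\vv{z}^\top A'\vv{z}$ over all cut vectors. Equivalently, $\vv{z}^\top A'\vv{z}=(P^\top\vv{z})^\top A(P^\top\vv{z})$, so the minimum over cut vectors for $\mathcal{C}_o$ equals the minimum for $\mathcal{C}_e$, and we could instead invoke Theorem~\ref{thm:eigvec_opt} directly.

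Finally, we translate back to parity assignments. By Theorem~\ref{thm:maxcut_Co}, minimizing $\vv{z}^\top A'\vv{z}$ is equivalent to minimizing the number of same-parity distance-$4$ pairs within $\mathcal{C}_o$. No distance-$4$ edge crosses the bipartition, as argued before Theorem~\ref{thm:maxcut}. The category~(iii) count therefore splits into independent contributions from $\mathcal{C}_e$ and $\mathcal{C}_o$, and categories~(i) and~(ii) are invariant. Each identical-parity codeword pair contributes a fixed number, $12$, of NHCW pairs. Hence the total number of distance-$2$ pairs is minimized, and constraint~\eqref{eq:opt_pairs} holds for $\mathcal{C}_o$.

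I do not expect a real obstacle. The only step needing care is confirming that $P$ preserves both the $\pm1$ structure and the minimality over the discrete set. This follows immediately because $P$ is a bijection on coordinates, so it induces a bijection between cut vectors of $\mathcal{C}_e$ and cut vectors of $\mathcal{C}_o$ that preserves the quadratic form.
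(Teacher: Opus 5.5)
Your proposal is correct and follows essentially the same route as the paper. You transport $\mathcal{Q}$ by $P$ via Corollary~\ref{cor:isomorphism}(c), note that the permutation preserves both the $\pm1$ entries and orthogonality, and then apply the Rayleigh-quotient argument of Theorem~\ref{thm:eigvec_opt} with $\lambda_{\min}$ shared by $A$ and $A'$ via Corollary~\ref{cor:isomorphism}(b). Your extra remarks only make explicit what the paper leaves implicit: the entry formula $(-1)^{\mathbf{u}\cdot(\mathbf{w}_i\oplus\mathbf{w}')}$, and the identity $\mathbf{z}^\top A'\mathbf{z}=(P^\top\mathbf{z})^\top A(P^\top\mathbf{z})$, which gives a form-preserving bijection of cut vectors.
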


\begin{IEEEproof}
By Corollary~\ref{cor:isomorphism}(c), each
$\vv{e}_{\vv{u}}'\triangleq P\vv{e}_{\vv{u}}\in\mathcal{Q}'$ is an
eigenvector of $A'$ corresponding to the same eigenvalue
$\lambda_{\vv{u}}$ as $\vv{e}_{\vv{u}}$ in $A$. Since $P$ is a
permutation matrix, it only reorders the entries of $\vv{e}_{\vv{u}}$;
as $\vv{e}_{\vv{u}}\in\{+1,-1\}^{|\mathcal{C}_e|}$, we have
$\vv{e}_{\vv{u}}'=P\vv{e}_{\vv{u}}\in\{+1,-1\}^{|\mathcal{C}_o|}$,
so it is a valid cut vector. Orthogonality of $\mathcal{Q}'$ follows
from that of $\mathcal{Q}$: since $P$ is orthogonal,
$\langle\vv{e}_{\vv{u}}',\vv{e}_{\vv{u}'}'\rangle =
\langle P\vv{e}_{\vv{u}}, P\vv{e}_{\vv{u}'}\rangle =
\langle\vv{e}_{\vv{u}},\vv{e}_{\vv{u}'}\rangle = 0$
for $\vv{u}\neq\vv{u}'$. The rest follows from
Theorem~\ref{thm:eigvec_opt} and Corollary~\ref{cor:isomorphism}(b),
which guarantee that $\lambda_{\min}$ is the same for both $A$
and $A'$.
\end{IEEEproof}

\begin{theorem}[Optimal Parity Assignment for $\mathcal{C}_o$,
Even $n$]
\label{thm:opt_Co_even}
Fix orderings $\vv{v}_1,\ldots,\vv{v}_{|\mathcal{C}_e|}$ and
$\vv{w}_1,\ldots,\vv{w}_{|\mathcal{C}_o|}$, and fix the isomorphism
$\phi(\vv{v})=\vv{v}\oplus\vv{w}'$ for some $\vv{w}'\in\mathcal{C}_o$.
Let $P$ be the permutation matrix with respect to these orderings.
For even $n$, let $\vv{u}$ be the punctured truth table of any
bent function $f_{\vv{u}}:\F_2^n\to\{0,1\}$ with
$f_{\vv{u}}(\vv{0})=0$. Then the cut vector
\begin{equation}
  \vv{z} = P\,\bigl[(-1)^{\vv{u}\cdot\vv{v}_1},\,
  (-1)^{\vv{u}\cdot\vv{v}_2},\,\ldots,\,
  (-1)^{\vv{u}\cdot\vv{v}_{|\mathcal{C}_e|}}\bigr]^\top
  \label{eq:z_Co}
\end{equation}
defines a parity assignment for $\mathcal{C}_o$ satisfying
constraint~\eqref{eq:opt_pairs}.
\end{theorem}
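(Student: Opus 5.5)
The plan is to reduce Theorem~\ref{thm:opt_Co_even} to the results already established for $\mathcal{C}_e$, using the isomorphism of Lemma~\ref{lem:isomorphism}. First, I invoke Proposition~\ref{prop:mineig}. When $\vv{u}$ is the punctured truth table of a bent function $f_{\vv{u}}$ with $f_{\vv{u}}(\vv{0})=0$, it gives that $\vv{e}_{\vv{u}}\in\mathcal{Q}$ is an eigenvector of $A$ with eigenvalue $\lambda_{\min}=-(2^n-1)(2^n-4)/12$. Next, Corollary~\ref{cor:isomorphism}(c), applied with the fixed permutation matrix $P$ induced by $\phi(\vv{v})=\vv{v}\oplus\vv{w}'$, shows that $\vv{z}=P\vv{e}_{\vv{u}}$ is an eigenvector of $A'$ with the same eigenvalue.

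Second, I identify this eigenvalue as the minimum eigenvalue of $A'$. Corollary~\ref{cor:isomorphism}(b) says $A$ and $A'$ are similar, so their spectra coincide. Hence $\lambda_{\min}$, the minimum over $\mathcal{Q}$, which forms a complete eigenbasis of $A$ by Corollary~\ref{cor:eigvecs}, is also the minimum eigenvalue of $A'$. This is the one point I would state explicitly: $\lambda_{\min}$ was defined as a minimum over the character eigenvalues, and this agrees with the true minimum eigenvalue of $A$ precisely because $\mathcal{Q}$ is a full eigenbasis. Therefore $\vv{z}\in\mathcal{Q}'$ corresponds to the minimum eigenvalue of $A'$.

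Third, I apply Theorem~\ref{thm:eigvec_opt_Co}. Because $P$ only permutes coordinates, $\vv{z}$ has entries in $\{+1,-1\}$ and is therefore a valid cut vector. The Rayleigh-quotient bound $\vv{z}^\top A'\vv{z}\ge\lambda_{\min}|\mathcal{C}_o|$ is attained by it, so it minimizes $\vv{z}^\top A'\vv{z}$ over $\{\pm1\}^{|\mathcal{C}_o|}$. By Theorem~\ref{thm:maxcut_Co}, this minimizes the number of same-parity distance-$4$ pairs within $\mathcal{C}_o$.

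Finally, I connect this to constraint~\eqref{eq:opt_pairs} using the decomposition before Theorem~\ref{thm:maxcut}. Categories (i) and (ii) are invariant. Distance-$4$ edges never cross the bipartition, and each same-parity distance-$4$ pair contributes exactly $12$ distance-$2$ NHCW pairs. The count of category (iii) therefore splits into independent $\mathcal{C}_e$ and $\mathcal{C}_o$ parts, and minimizing the $\mathcal{C}_o$ part is exactly what $\vv{z}$ achieves.

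The main obstacle is bookkeeping rather than new mathematics. One must ensure the $\mathcal{C}_o$ assignment is compatible with Theorem~\ref{thm:bipartite_parity}, i.e.\ that it draws from the complementary parity set to the one used for $\mathcal{C}_e$. Here $z_i=\pm1$ only selects between $\vv{p}'$ and $\bar{\vv{p}}'$ inside that set, so validity and $d_{\min}^*=2$ are preserved.
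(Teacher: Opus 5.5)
Your proposal is correct and takes essentially the same approach as the paper, whose proof simply chains Proposition~\ref{prop:mineig}, Theorem~\ref{thm:eigvec_opt_Co} (which itself rests on Corollary~\ref{cor:isomorphism}), and Theorem~\ref{thm:maxcut_Co}. Your additional remarks only make explicit what the paper leaves implicit. These are that $\lambda_{\min}$ is the true minimum eigenvalue of $A'$ because $\mathcal{Q}$ is a full eigenbasis, and that~\eqref{eq:opt_pairs} reduces to the independent $\mathcal{C}_o$ part via the category (i)--(iii) decomposition.
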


\begin{IEEEproof}
Follows directly from Theorems~\ref{thm:maxcut_Co},
\ref{thm:eigvec_opt_Co}, and Proposition~\ref{prop:mineig}.
\end{IEEEproof}

\begin{remark}[Complete SEFCC Construction]
\label{rem:construction}
Theorems~\ref{thm:bipartite_parity},
\ref{thm:opt_parity_even}, and~\ref{thm:opt_Co_even} together
describe a complete procedure for constructing an optimal SEFCC
$\mathcal{C}:\F_2^{2^n-1}\to\F_2^{2^n+1}$ satisfying
constraints~\eqref{eq:sefcc_condition},~\eqref{eq:opt_dmin},
and~\eqref{eq:opt_pairs} for even $n$:
\begin{enumerate}[nosep, topsep=2pt, partopsep=0pt]
\item Assign $\mathcal{C}_e$ parities from $P_e$ (without loss of
generality) and $\mathcal{C}_o$ parities from $P_o$.
\item For even $n$, compute $\vv{u}$ as the punctured truth
table of any bent function $f_{\vv{u}}:\F_2^n\to\{0,1\}$ with
$f_{\vv{u}}(\vv{0})=0$. The cut vector
$\vv{e}_{\vv{u}}=[(-1)^{\vv{u}\cdot\vv{v}_i}]_{i=1}^{|\mathcal{C}_e|}$
assigns each $\vv{v}_i\in\mathcal{C}_e$ parity $\vv{p}\in P_e$ if
$(e_{\vv{u}})_i=+1$ and parity $\bar{\vv{p}}$ if
$(e_{\vv{u}})_i=-1$.
\item Apply $P$ to obtain the cut vector $P\vv{e}_{\vv{u}}$ for
$\mathcal{C}_o$, assigning parities from $P_o$ analogously.
\item Assign each NHCW $\vv{u}_{i,j}$ the parity
$\overline{\pC(\vv{c}_i)}$ per Theorem~\ref{thm:valid_dmin2}(a).
\end{enumerate}
For odd $n$, the problem is reduced to finding $f_{\vv{u}}'$ minimizing~\eqref{eq:obj_L}, and then the same procedure works with $f_{\vv{u}}'$ in step 2).
\end{remark}


\section{Conclusion}
\label{sec:conc}
This paper studied SEFCCs with optimal data protection for the
$[2^n-1,\,2^n-1-n,\,3]$-Hamming code membership function for
general $n \geq 2$. The validity conditions
of~\cite{durgi2026hcmf} were extended to the general HCMF, and
the bipartite structure of the distance-$3$ codeword graph was
established for all $n \geq 2$. Since the construction
of~\cite{durgi2026hcmf} does not generalise, a novel framework
was developed that reduces optimal SEFCC design to a max-cut
problem on the distance-$4$ graphs of the two partite sets. It was shown that certain eigenvectors corresponding to the minimum eigenvalues of these graphs are representational of SEFCCs satisfying constraints~\eqref{eq:opt_dmin} and~\eqref{eq:opt_pairs}.
The minimum eigenvalue of one of these graphs and its eigenvectors were
characterised via an optimisation problem over Walsh coefficients,
and a tight lower bound was derived and shown to be attained by
bent functions for even $n \geq 2$, establishing a precise
connection between optimal SEFCC design and the theory of bent
Boolean functions. The eigenvectors of the second graph were derived easily with the help of those of the first graph. The solution specialised to $n=3$ by solving the optimization problem manually subsumes
the construction of~\cite{durgi2026hcmf}. For general odd $n$, the
characterization of $\lambda_{\min}$ and the corresponding
optimal eigenvectors (the analogue of
Proposition~\ref{prop:lowerbound} for odd $n$) remains an open
optimization problem and the corresponding optimal eigenvectors and parity assignments remain uncharacterized.


\appendices
\section{Proof that $\mathcal{C}_e^\perp = \mathcal{RM}^*(1,n)$}
\label{appendix:A}

We show that the dual of $\mathcal{C}_e$ is the punctured first-order
Reed--Muller code $\mathcal{RM}^*(1,n)$.

\textit{Reed--Muller code $\mathcal{RM}(1,n)$:} The first-order
Reed--Muller code $\mathcal{RM}(1,n)$ is the binary linear code of
length $2^n$ whose codewords are the truth tables of all affine
boolean functions on $\F_2^n$, i.e.,
\begin{equation*}
  \mathcal{RM}(1,n)
  = \bigl\{\vv{v}_f \mid f(\vv{x}) = \vv{a}\cdot\vv{x} \oplus c,\;
    \vv{a}\in\F_2^n,\; c\in\{0,1\}\bigr\},
\end{equation*}
where $\vv{v}_f$ denotes the truth table of $f$. The dimension of
$\mathcal{RM}(1,n)$ is $n+1$, since the set of functions
$\{1, x_1, x_2, \ldots, x_n\}$ forms a basis for the vector space
of affine functions on $\F_2^n$.

\textit{Punctured code $\mathcal{RM}^*(1,n)$:} The punctured
first-order Reed--Muller code $\mathcal{RM}^*(1,n)$ is obtained from
$\mathcal{RM}(1,n)$ by deleting the coordinate corresponding to the
evaluation at $\vv{x}=\vv{0}$. It has length $2^n-1$ and dimension
$n+1$, and its codewords are the punctured truth tables of all affine
functions on $\F_2^n$.

\textit{Dual of the Hamming code $\mathcal{C}_H$:} Let $H$ be the
$n\times(2^n-1)$ parity check matrix of $\mathcal{C}_H$, so that
$\mathcal{C}_H=\ker(H)$. Then
\begin{equation*}
  \mathcal{C}_H^\perp
  = \ker(H)^\perp
  = \mathrm{Col}(H^\top)
  = \mathrm{Row}(H).
\end{equation*}
For $\vv{a}\in\F_2^n$, the vector $\vv{y}=H^\top\vv{a}$ has $i$-th
coordinate $y_i=\vv{a}\cdot\vv{\alpha}_i$, where $\vv{\alpha}_i$
denotes the $i$-th column of $H$. Since the columns
$\vv{\alpha}_1,\ldots,\vv{\alpha}_{2^n-1}$ of a Hamming code's parity
check matrix are exactly all the nonzero vectors in $\F_2^n$, the
vector $\vv{y}$ is precisely the punctured truth table of the linear
function $\ell_{\vv{a}}(\vv{x})=\vv{a}\cdot\vv{x}$. Hence
$\mathcal{C}_H^\perp$ consists of the punctured truth tables of all
linear functions on $\F_2^n$, i.e., $\mathcal{C}_H^\perp$ is the
punctured simplex code.

\textit{Structure of $\mathcal{C}_e$:} Let
$E\triangleq\{\vv{x}\in\F_2^{2^n-1}\mid\wt(\vv{x})\equiv 0\pmod{2}\}$
be the even-weight subspace of $\F_2^{2^n-1}$. Note that
$\vv{x}\in E$ iff $\vv{x}\cdot\bar{\vv{1}}=0$, where
$\bar{\vv{1}}\in\F_2^{2^n-1}$ is the all-ones vector. Hence
$E=\langle\bar{\vv{1}}\rangle^\perp$, and
$\mathcal{C}_e=\mathcal{C}_H\cap E$.

\textit{Computing $\mathcal{C}_e^\perp$:} By the fundamental identity
$(A\cap B)^\perp = A^\perp + B^\perp$ for linear subspaces $A,B$ of
a finite-dimensional inner product space~\cite{macwilliams1977},
\begin{equation*}
  \mathcal{C}_e^\perp
  = (\mathcal{C}_H\cap E)^\perp
  = \mathcal{C}_H^\perp + E^\perp
  = \mathcal{C}_H^\perp + \langle\bar{\vv{1}}\rangle.
\end{equation*}
Any $\vv{v}\in\mathcal{C}_e^\perp$ is of the form
$\vv{u}\oplus c\bar{\vv{1}}$ for some $\vv{u}\in\mathcal{C}_H^\perp$
and $c\in\{0,1\}$. Since $\vv{u}$ is the punctured truth table of
$\vv{a}\cdot\vv{x}$ for some $\vv{a}\in\F_2^n$, the $i$-th
coordinate of $\vv{v}$ is
\begin{equation*}
  v_i = \vv{a}\cdot\vv{\alpha}_i \oplus c,
\end{equation*}
which is the evaluation at $\vv{\alpha}_i$ of the affine function
$f(\vv{x})=\vv{a}\cdot\vv{x}\oplus c$. Therefore $\vv{v}$ is the
punctured truth table of the affine function $\vv{a}\cdot\vv{x}\oplus c$,
and $\mathcal{C}_e^\perp$ contains the punctured truth tables of all
$2^{n+1}$ affine boolean functions on $\F_2^n$. Since
$|\mathcal{C}_e^\perp|=2^{n+1}=|\mathcal{RM}^*(1,n)|$, we conclude
\begin{equation*}
  \mathcal{C}_e^\perp = \mathcal{RM}^*(1,n). 
\end{equation*}

\section{Proof that $24\,K_4(w;\,N) = X^4 + (8-6N)X^2 + 3N^2 - 6N$}
\label{appendix:B}

We derive a closed-form polynomial expression for the degree-$4$
Krawtchouk polynomial $K_4(w;\,N)$. Recall that the Krawtchouk
polynomial of degree $h$ is defined as
\begin{equation*}
  K_h(w;\,N)
  = \sum_{\substack{\vv{x}\in\F_2^N \\ \wt(\vv{x})=h}}(-1)^{\vv{y}\cdot\vv{x}},
  \quad \wt(\vv{y})=w,
\end{equation*}
which admits the explicit form
\begin{equation*}
  K_h(w;\,N)
  = \sum_{j=0}^{h}(-1)^j\binom{w}{j}\binom{N-w}{h-j}.
\end{equation*}
Note the symmetry property $K_h(w;\,N)=(-1)^h K_h(N-w;\,N)$.
The generating function of $K_h(w;\,N)$ in $h$ is
\begin{equation}
  G(z) = (1-z)^w(1+z)^{N-w}
  = \sum_{h=0}^{N} K_h(w;\,N)\,z^h,
  \label{eq:genfun}
\end{equation}
i.e., $K_h(w;\,N)$ is the coefficient of $z^h$ in $G(z)$.

\textit{Substitution:} Set $X \triangleq N-2w$, so that
$w = \tfrac{N-X}{2}$ and $N-w = \tfrac{N+X}{2}$. Then
\begin{align*}
  G(z)
  &= (1-z)^{\frac{N-X}{2}}(1+z)^{\frac{N+X}{2}} \\
  &= (1-z^2)^{N/2}\left(\frac{1+z}{1-z}\right)^{X/2} \\
  &= (1-z^2)^{N/2}
     \exp\!\left(\frac{X}{2}\ln\frac{1+z}{1-z}\right).
\end{align*}
Using the Taylor expansion
$\tfrac{1}{2}\ln\tfrac{1+z}{1-z} = z + \tfrac{z^3}{3} + \cdots$,
\begin{equation}
  G(z) = (1-z^2)^{N/2}
  \exp\!\left(Xz + \frac{Xz^3}{3} + \cdots\right).
  \label{eq:Gexpanded}
\end{equation}

\textit{Expanding each factor up to $z^4$:}
\begin{align*}
  (1-z^2)^{N/2}
  &= 1 - \frac{N}{2}z^2
     + \frac{(N/2)(N/2-1)}{2}z^4 + O(z^6) \\
  &= 1 - \frac{N}{2}z^2
     + \frac{N^2-2N}{8}z^4 + O(z^6),
\end{align*}
\begin{align*}
  \exp\!\left(Xz + \frac{Xz^3}{3}\right)
  &= 1
   + \left(Xz + \frac{Xz^3}{3}\right)
   + \frac{1}{2}\left(Xz + \frac{Xz^3}{3}\right)^2 \\
  &\quad + \frac{(Xz)^3}{6}
   + \frac{(Xz)^4}{24}
   + O(z^5).
\end{align*}
Collecting the relevant terms up to $z^4$ from the exponential:
\begin{align*}
  [z^0] &: 1, \\
  [z^2] &: \frac{X^2}{2}, \\
  [z^4] &: \frac{X^4}{24} + \frac{X^2}{3}.
\end{align*}

\textit{Extracting the coefficient of $z^4$:}
Multiplying the two factors and collecting the coefficient of $z^4$:
\begin{align*}
  [z^4]\,G(z)
  &= 1\cdot\left(\frac{X^4}{24}+\frac{X^2}{3}\right)
   + \left(-\frac{N}{2}\right)\cdot\frac{X^2}{2}
   + \frac{N^2-2N}{8}\cdot 1 \\
  &= \frac{X^4}{24}
   + \frac{X^2}{3}
   - \frac{NX^2}{4}
   + \frac{N^2-2N}{8}.
\end{align*}
Multiplying through by $24$:
\begin{align*}
  24\,K_4(w;\,N)
  &= X^4 + 8X^2 - 6NX^2 + 3N^2 - 6N \\
  &= X^4 + (8-6N)X^2 + 3N^2 - 6N,
\end{align*}
where $X = N - 2w$. 

\section*{Acknowledgment}
This work was supported partly by the seed grant provided by the Indian Institute of Technology, Hyderabad to Anjana A. Mahesh.

\bibliographystyle{IEEEtran}

\end{document}